\newtheorem{theorem}{Theorem}
\newtheorem{definition}[theorem]{Definition}
\newtheorem{lemma}[theorem]{Lemma}
\newenvironment{proof}[1][Proof]{\noindent\textbf{#1.} }{\ }
\begin{document}
\title{Effect of phase shifts on EPR entanglement generated on two propagating Gaussian fields via coherent feedback}
\author{Zhan Shi and Hendra I. Nurdin 
\thanks{
Z. Shi and H. I. Nurdin are with School of Electrical Engineering and 
Telecommunications,  UNSW Australia,  
Sydney NSW 2052, Australia (e-mail: zhan.shi@student.unsw.edu.au,  h.nurdin@unsw.edu.au).} 
}
\maketitle

\begin{abstract}
Recent work has shown that deploying two nondegenerate optical parametric amplifiers (NOPAs) separately at two distant parties in a coherent feedback loop generates stronger Einstein-Podolski-Rosen (EPR) entanglement between two propagating continuous-mode output fields than a single NOPA under same pump power, decay rate and transmission losses. 
The purpose of this paper is to investigate the stability and EPR entanglement of a dual-NOPA coherent feedback system under the effect of phase shifts in the transmission channel between two distant parties. It is shown that, in the presence of phase shifts, EPR entanglement worsens or can vanish, but can be improved to some extent in certain scenarios by adding a phase shifter at each output with a certain value of phase shift. In ideal cases, in the absence of transmission and amplification losses, existence of EPR entanglement and whether the original EPR entanglement can be recovered by the additional phase shifters are decided by values of the phase shifts in the path.
\end{abstract}

\section{Introduction}
\label{sec:intro}
Entanglement is a key resource for quantum information processing. As an open quantum system is susceptible to external environment, entanglement would decay due to losses caused by unwanted interaction between the quantum system and its external electromagnetic field, which may lead to failure of quantum communication between two distant parties (Alice and Bob) and limit transmission distance \cite{GardinerBook}. Therefore, reliable generation and distribution of entanglement between two distant communicating parties (Alice and Bob) has become increasingly important.  Continuous-variable entanglement has an advantage over discrete-variable one due to its high efficiency in generation and measurement of quantum states \cite{BL2005}. As the most widely used continuous variable entangled resource, Gaussian EPR-like entangled pairs can be generated between amplitude and phase quadratures of two outgoing light beams of a nondegenerate optical parametric amplifier (NOPA) \cite{BSLR, Ou1992}. 

The main component of a NOPA is a two-end cavity which consists of a nonlinear $\chi^{(2)}$ crystal and mirrors. With a strong undepleted coherent pump beam employed as a source of energy, interactions between the pump beam and two modes inside the cavity generate a pair of outgoing beams in Gaussian EPR-like entangled states \cite{Ou1992}. In Fig \ref{fig:dual-nopa-cfb}, a NOPA is simply denoted by a block with four inputs and two outputs. More details of the NOPA are given in Section \ref{sec:system-model}.

\begin{figure}[htbp]
\begin{center}
\includegraphics[scale=0.5]{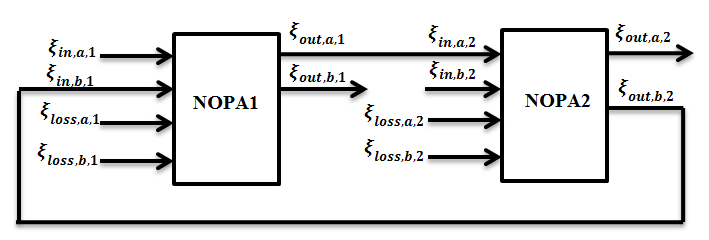}
\caption{A dual-NOPA coherent feedback system}\label{fig:dual-nopa-cfb}
\end{center}
\end{figure}

Our previous work \cite{SN2014} presents a dual NOPA coherent feedback system where two NOPAs are separately located at two distant endpoints (Alice and Bob) and connected in a feedback loop without employing any measurement devices, shown in Fig \ref{fig:dual-nopa-cfb}. 
In the network, two entangled outgoing fields $\xi_{out,a,2}$ and $\xi_{out,b,1}$ are generated. Our previous work \cite{SN2014} shows that under the same pump power, decay rate and transmission losses, the dual-NOPA coherent feedback network generates stronger EPR entanglement than a single NOPA placed in the middle of the two ends (at Charlie's). The paper also examines effects of losses and time delays on the dual-NOPA system. Not surprisingly,  EPR entanglement worsens as transmission and amplification losses increase; transmission time delays reduce the range of frequency over which EPR entanglement exists.

In this paper, we examine the effect of phase shifts along the transmission channels on EPR entanglement generated by the dual-NOPA coherent feedback system. What we are interested in is whether phase shifts degrade EPR entanglement; if they do, then whether we can recover it or minimize the EPR entanglement reduction by placing two adjustable phase shifters separately at each output. The paper is organised as follows. Section \ref{sec:prelim} briefly introduces linear quantum systems and an EPR entanglement criterion between two continuous-mode Gaussian fields. A description of our dual-NOPA coherent feedback system under influence of losses and phase shifts is given in Section \ref{sec:system-model}. Section \ref{sec:analysis} investigates the stability condition, as well as EPR entanglement under effects of phase shifts in a lossless system and a more general case where transmission losses and amplification losses are considered. Finally Section \ref{sec:conclusion} gives the conclusion of this paper.

\section{Preliminaries}
\label{sec:prelim}
This paper employs the following notations. $\imath$ denotes $\sqrt{-1}$, the transpose of a matrix of numbers or operators is denoted by $\cdot^T$ and $\cdot^*$ denotes (i) the complex conjugate of a number, (ii) the conjugate transpose of a matrix, as well as (iii) the adjoint of an operator. $I$ denotes an identity matrix. Trace operator is denoted by $\operatorname{Tr[\cdot]}$.

\subsection{Linear quantum systems}
\label{sec:linear-qsys}
An open linear quantum system without a scattering process contains $n$-bosonic modes $a_j(t)~(j=1,\ldots, n)$ satisfying $[a_i(t), a_j(t)^*]=\delta_{ij}$. The dynamics of the system can be described by the time-varying interaction Hamitonian between the system and environment
\begin{equation}
\label{interaction}
   H_{\rm int}(t) = \imath \sum_{j=1}^m (L_j\xi_j(t)^* - L_j^* \xi_j(t)), 
\end{equation}
in which $L_j$ is the $j$-th system coupling operator and  $\xi_j(t)~(j=1,\ldots,m)$ is the field operator describing the $j$-th environment field  \cite{GardinerBook}. When the environment is under the condition of the Markov limit, the field operator under the vacuum state satisfies $[\xi_j(t), \xi_j(s)^*]=\delta(t-s)$, where $\delta(t)$ denotes the Dirac delta function. 
When $L_j$ is linear and $H$ is quadratic in $a_j$ and $a_j^*$, the Heisenberg evolutions of mode $a_j$ and output filed operator $\xi_{out,j}$ are defined by $a_j(t)=U(t)^* a_j U(t)$  and $\xi_{out,j}(t)=U(t)^*\xi_{in,j}(t)U(t)$ with unitary  $U(t)={\rm exp}^{\hspace{-0.5cm}\longrightarrow}~
(-i\int_0^t H_{\rm int}(s)ds)$ and is of the form
\begin{eqnarray}
    \dot{z}(t)&=&Az(t)+B\xi(t), \label{eq:dynamics} \\
     \xi_{out,j}(t)&=&Cz(t)+D\xi(t), \label{eq:output}
\end{eqnarray}
for some real matrices $A$, $B$, $C$ and $D$, where we have defined 
\begin{eqnarray}
    z&=&(a_1^q, a_1^p, \ldots, a_n^q, a_n^p)^T, \nonumber\\
    \xi &=&(\xi_{1}^q, \xi_{1}^p, \ldots, \xi_{m}^q, \xi_{m}^p)^T, \nonumber\\
    \xi_{out}&=&(\xi_{out,1}^q, \xi_{out,1}^p, \ldots, \xi_{out,l}^q, \xi_{out,l}^p)^T,
\end{eqnarray}
with {\it quadratures} \cite{Belavkin2008,WisemanBook}
\begin{eqnarray}
a_j^q &=& a_j+a_j^*, \quad a_j^p = (a_j-a_j^*)/i, \nonumber \\
\xi_j^q &=& \xi_j+\xi_j^*, \quad \xi_j^p = (\xi_j-\xi_j^*)/i. \label{eq:quadratures}
\end{eqnarray}

\vspace{6pt}
\subsection{EPR entanglement between two continuous-mode fields}
\label{sec:entanglement}
Unlike bipartite entanglement of two-mode Gaussian states, which can be measured by the logarithmic negativity \cite{Laurat2005}, EPR entanglement between two continuous-mode (many-mode) output fields $\xi_{out,1}$ and $\xi_{out,2}$ has to be evaluated in frequency domain \cite{Ou1992,BL2005,Vitali2006}. $F(\imath\omega)$, the Fourier transform of $f(t)$, can be achieved by Fourier transformation $F\left(\imath\omega\right)=\frac{1}{\sqrt{2\pi}}\int_{-\infty}^{\infty} f\left(t\right)e^{-\imath\omega t} dt$. Similarly, we can get the Fourier transforms of 
$\xi_{out,1}(t)$, $\xi_{out,2}(t)$, $z(t)$ and $\xi(t)$, 
as  $\tilde \Xi_{out,1}\left(\imath\omega\right)$, $\tilde \Xi_{out,2}\left(\imath\omega\right)$, $Z(\imath \omega)$  and  $\Xi(\imath \omega)$, respectively. 

Based on equations (\ref{eq:dynamics}) and (\ref{eq:output}), we have
\begin{align}
\tilde \Xi_{out,1}^q(\imath \omega)+\tilde \Xi_{out,2}^q(\imath \omega)=C_1 Z\left(\imath\omega\right)+ D_1\Xi\left(\imath\omega\right), \nonumber\\
\tilde \Xi_{out,1}^p(\imath \omega)-\tilde \Xi_{out,2}^p(\imath \omega)=C_2 Z\left(\imath\omega\right)+ D_2\Xi\left(\imath\omega\right),
\end{align}
where $ C_1=[1\ 0\ 1\ 0]C$, $C_2=[0\ 1\ 0\ {-}1]C$, $D_1=[1\ 0\ 1\ 0]D$ and $D_2=[0\ 1\ 0\ {-}1]D$.

If the ingoing signals are in a vacuum state, the EPR entanglement between the two fields are related to the two-mode amplitude squeezing spectra $V_{+}$ and the two-mode phase squeezing spectra $V_{-}$ which have the following definitions
\begin{align}
& \quad \langle (\tilde \Xi_{out,1}^q(\imath \omega)+\tilde \Xi_{out,2}^q(\imath \omega))^* (\tilde \Xi_{out,1}^q(\imath \omega')+\tilde \Xi_{out,2}^q(\imath \omega')) \rangle = V_+(\imath \omega)\delta(\omega-\omega'), \nonumber \\
& \quad  \langle (\tilde \Xi_{out,1}^p(\imath \omega)-\tilde \Xi_{out,2}^p(\imath \omega))^* (\tilde \Xi_{out,1}^p(\imath \omega')-\tilde \Xi_{out,2}^p(\imath \omega')) \rangle = V_-(\imath \omega) \delta(\omega-\omega'),
\end{align}
where $\langle \cdot \rangle$ denotes quantum expectation. 
$V_+(\imath \omega)$ and $V_-(\imath \omega)$ are real valued and can be easily calculated by using the transfer functions $H_j(\imath\omega)=C_{j}\left(\imath\omega I-A \right)^{-1}B+D_{j}$  ($j=1,2$), as described in \cite{NY2012,GJN2010},
\begin{align}
V_+(\imath\omega)=& {\rm Tr}\left[H_1(\imath\omega)^* H_1(\imath\omega)\right], \label{eq:V_+}\\
V_-(\imath\omega)=& {\rm Tr}\left[H_2(\imath\omega)^* H_2(\imath\omega)\right]. \label{eq:V_-}
\end{align}
Denote $V(\imath\omega) = V_+(\imath\omega)+V_-(\imath\omega)$. The sufficient condition that the fields $\xi_{out,1}$ and  $\xi_{out,2}$ are correlated at the frequency $\omega$ rad/s is \cite{Vitali2006},
\begin{align}
V(\imath\omega)< 4. \label{eq:entanglement-criterion}
\end{align}
Ideally, we would like $V(\imath\omega) = V_+(\imath\omega) = V_-(\imath\omega)= 0$ for all $\omega$, which denotes infinite-bandwidth two-mode squeezing, representing an ideal Einstein-Podolski-Rosen state. However, in reality the ideal EPR correlation can not be achieved,  so in practice the goal is to make $V(\imath \omega)$ as small as possible over a wide frequency range \cite{Vitali2006}. Following \cite{SN2014} and \cite{NY2012}, for low frequencies, we have a good approximation that $V_{+}(i\omega) \approx V_{+}(0)$ and $V_{-}(i\omega) \approx V_{-}(0)$.

Define $\xi^{\psi_1}_{out,1}=e^{\imath \psi_1}\xi_{out,1}$, $\xi^{\psi_2}_{out,2}=e^{\imath \psi_2}\xi_{out,2}$ with $\psi_1, \psi_2 \in(-\pi,\pi]$ and denote the corresponding two-mode squeezing spectra as $V^{\psi_1, \psi_2}_\pm(\imath\omega,\psi_1, \psi_2)$, we have the following definition of EPR entanglement.
\begin{definition}
Fields $\xi_{out,1}$ and  $\xi_{out,2}$ are EPR entangled at the frequency $\omega$ rad/s if $\exists ~\psi_1, \psi_2 \in(-\pi,\pi]$ such that
\begin{align}
V^{\psi_1, \psi_2}(\imath\omega,\psi_1, \psi_2) = V^{\psi_1, \psi_2}_+(\imath\omega,\psi_1, \psi_2)+V^{\psi_1, \psi_2}_-(\imath\omega,\psi_1, \psi_2)< 4. \label{eq:entanglement-criterion-2}
\end{align}
Unless otherwise specified, throughout the paper EPR entanglement refers to the case with $\psi_1=\psi_2=0$.
EPR entanglement is said to vanish at frequency $\omega$ if there are no values of $\psi_1$ and $\psi_2$ satisfying the above criterion.
\end{definition}
  
\section{The system model}
\label{sec:system-model}
In this section, we consider a dual-NOPA ($G_1$ and $G_2$) coherent feedback network shown in Fig. \ref{fig:dual-nopa-cfb-ps} previously proposed in \cite{SN2014}. During the transmission, the system undergoes transmission losses and possibly some phase shifts. Transmission loss in each path of the system is modelled by a beamsplitter with transmission rate $\alpha$ and reflection rate $\beta$ ($\alpha^2+\beta^2=1$), and phase shift  $\theta_i$ $(i=1,2)$ in each path is modelled by a phase shifter, whose outgoing field $\xi_{out}$ and input field $\xi_{in}$ have the relation $ \xi_{out}=e^{\imath \theta_i} \xi_{in}$  \cite{NJD2009, GK2005}.  Two phase shifters with adjustable phase shifts $\phi_1$ and $\phi_2$ are placed at two outputs separately. We are interested in EPR entanglement generated between continuous-mode outgoing fields $\xi_{out,a,2}$ and $\xi_{out,b,1}$.

\begin{figure*}[htbp]
\begin{center}
\includegraphics[scale=0.5]{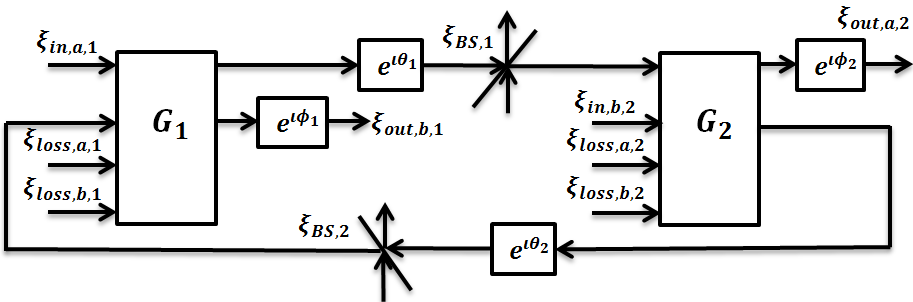}
\caption{A dual-NOPA coherent feedback system under effects of losses and phase shifts.}\label{fig:dual-nopa-cfb-ps}
\end{center}
\end{figure*}
Each NOPA $G_i$ $(i=1,2)$ has two oscillator modes $a_i$ and $b_i$ inside its cavity. As a strong coherent beam is pumped to the nonlinear $\chi^{(2)}$ crystal in the cavity, the modes $a_i$ and $b_i$ are coupled via a two-mode squeezing Hamiltonian $H= \frac{\imath}{2} \epsilon\left( a_i^* b_i^*- a_ib_i\right)$, where $\epsilon$ is a real coupling coefficient related to the amplitude of the pump beam \cite{Ou1992}. Mode $a_i$ is coupled to ingoing noise $\xi_{in,a,i}$ and amplification loss $\xi_{loss,a,i}$ via coupling operators  $L_1=\sqrt{\gamma}a_i$ and $L_3=\sqrt{\kappa}a_i$, respectively, for some constant damping rates $\gamma$ and $\kappa$; similarly mode $b_i$  interacts with input signal $\xi_{in,b,i}$ and additional noise $\xi_{loss,b,i}$ by operators $L_2=\sqrt{\gamma}b_i$ and $L_4=\sqrt{\kappa}b_i$. The modes satisfy the commutation relations $[a_i, a_j^*]=\delta_{ij}$, $[a_i, b_j]=0$,  $[a_i, b_j^*]=0$,  $[a_i, a_j]=0$ and $[b_i, b_j]=0$ $(i,j=1,2)$ \cite{GardinerBook}.

The dynamics of the system in Fig. \ref{fig:dual-nopa-cfb-ps} is given by
\small
\begin{align}
\dot{a_1}\left(t\right)=&-\frac{\gamma+\kappa}{2}a_1\left(t\right)+\frac{\epsilon} {2}b^*_1\left(t\right)-\sqrt{\gamma}\xi_{in,a,1}\left(t\right)-\sqrt{\kappa}\xi_{loss,a,1}\left(t\right), \nonumber \\ 
\dot{b_1}\left(t\right)=&-\frac{\gamma+\kappa}{2}b_1\left(t\right)+\frac{\epsilon}{2}a^*_1\left(t\right)-\alpha \gamma e^{\imath \theta_2}b_2\left(t\right) -\alpha \sqrt{\gamma}e^{\imath \theta_2} \xi_{in,b,2}\left(t\right) -\sqrt{\kappa}\xi_{loss,b,1}\left(t\right) -\beta \sqrt{\gamma}\xi_{BS,2}\left(t\right), \nonumber \\ 
\dot{a_2}\left(t\right)=&-\frac{\gamma+\kappa}{2}a_2\left(t\right)+\frac{\epsilon}{2}b^*_2\left(t\right)-\alpha \gamma e^{\imath \theta_1}a_1\left(t\right) -\alpha \sqrt{\gamma}e^{\imath \theta_1} \xi_{in,a,1}\left(t\right) -\sqrt{\kappa}\xi_{loss,a,2}\left(t\right) -\beta \sqrt{\gamma}\xi_{BS,1}\left(t\right), \nonumber \\ 
\dot{b_2}\left(t\right)=&-\frac{\gamma+\kappa}{2}b_2\left(t\right)+\frac{\epsilon}{2}a^*_2\left(t\right)-\sqrt{\gamma}\xi_{in,b,2}\left(t\right) -\sqrt{\kappa}\xi_{loss,b,2}\left(t\right), \label{eq:NOPAS-internal}
\end{align}
\normalsize
with outputs
\small
\begin{align}
\xi_{out,b,1}\left(t\right)=&\sqrt{\gamma}e^{\imath \phi_1} b_1\left(t\right)+\alpha\sqrt{\gamma}e^{\imath (\theta_2+\phi_1)} b_2\left(t\right)+ \alpha e^{\imath (\theta_2+\phi_1)} \xi_{in,b,2}\left(t\right) + \beta e^{\imath \phi_1}\xi_{BS,2}\left(t\right) ,\nonumber \\
\xi_{out,a,2}\left(t\right)=&\sqrt{\gamma}e^{\imath \phi_2} a_2\left(t\right)+\alpha\sqrt{\gamma}e^{\imath (\theta_1+\phi_2)} a_1\left(t\right)+ \alpha e^{\imath (\theta_1+\phi_2)} \xi_{in,a,1}\left(t\right) + \beta e^{\imath \phi_2}\xi_{BS,1}\left(t\right). \label{eq:NOPAs-out}
\end{align}
\normalsize
Define the quadratures
\begin{align}
z=&[a^q_1, a^p_1, b^q_1, b^p_1,a^q_2, a^p_2, b^q_2, b^p_2]^T,\nonumber \\
\xi=&[\xi^q_{in,a,1},\xi^p_{in,a,1},\xi^q_{in,b,2},\xi^p_{in,b,2},\xi^q_{loss,a,1},\xi^p_{loss,a,1},\xi^q_{loss,b,1},\xi^p_{loss,b,1},\nonumber\\
&\xi^q_{loss,a,2},\xi^p_{loss,a,2},\xi^q_{loss,b,2},\xi^p_{loss,b,2},\xi^q_{BS,1},\xi^p_{BS,1},\xi^q_{BS,2},\xi^p_{BS,2}]^T,\nonumber\\
\xi_{out}=&[\xi^q_{out,b,1},\xi^p_{out,b,1},\xi^q_{out,a,2},\xi^p_{out,a,2}]^T.
\end{align}
According to (\ref{eq:dynamics}), (\ref{eq:output})， (\ref{eq:NOPAS-internal}) and (\ref{eq:NOPAs-out}), we have
\begin{eqnarray}
\dot{z}\left(t\right) &=& A z\left(t\right)+ B \xi\left(t\right), \label{eq:dual-NOPA-dynamics-1} \\
\xi_{out}\left(t\right) &=& C z\left(t\right)+D\xi\left(t\right),\label{eq:dual-NOPA-dynamics-2}
\end{eqnarray}
where $A, B, C$, and $D$ are real matrices. 

As mentioned in Section \ref{sec:entanglement}, two-mode squeezing spectra $V_{\pm}(\imath\omega)$ can be approximated to $V_{\pm}(0)$ at $\omega=0$ in low frequency domain. In the remainder of this paper, we evaluate degree of EPR entanglement between outgoing fields $\xi_{out,a,2}$ and $\xi_{out,b,1}$ by $V_{\pm}(0)$. Based on (\ref{eq:V_+}) and (\ref{eq:V_-}), we get
\begin{eqnarray}
V(0) = V_+(0)+V_-(0)={\rm Tr}\left[H_1^{*}H_1+H_2^{*}H_2\right],\label{eq:entanglement}
\end{eqnarray}
where $H = D-CA^{-1}B$, $H_1 = \left[\begin{array}{cccc} 1 & 0 & 1 & 0 \end{array}\right]H$ and  $ H_2 = \left[\begin{array}{cccc} 0 & 1 & 0 & -1 \end{array}\right]H$.

\section{Analysis of effect of phase shifts on the dual-NOPA coherent feedback system}
\label{sec:analysis}
Here we analyse effects of phase shifts $\theta_1$ and $\theta_2$ on stability and EPR entanglement of the dual-NOPA coherent feedback system. We investigate EPR entanglement when the system is lossless, that is, amplification and transmission losses are neglected, as well as EPR entanglement of the system with losses. Moreover, we examine effects of adjustable phase shifters with phase shifts $\phi_1$ and $\phi_2$ to see whether they can recover the EPR entanglement impacted by $\theta_1$ and $\theta_2$.  

Parameters of the system are defined as follows. Based on \cite{SN2014} and \cite{Iida2012}, we define $\gamma_r=7.2 \times 10^7$ Hz as a reference value of the transmissivity mirrors, $\epsilon=x\gamma_r$ Hz and $\gamma= \frac{\gamma_r}{y}$ Hz, where $x$ and $y$ ($0<x,y \leq 1$) are adjustable real parameters. Following \cite{SN2014, Iida2012}, we assume that $\kappa= \frac{3 \times 10^6}{\sqrt{2}}$ when $\epsilon=0.6 \gamma_r$  and the value of $\kappa$ is proportional to the absolute value of $\epsilon$, so we set $\kappa=\frac{3 \times 10^6}{\sqrt{2} \times 0.6}x$. Transmission rate $\alpha \in (0,1]$ and reflection rate $\beta=\sqrt{1-\alpha^2}$. Range of phase shifts $\theta_1, \theta_2, \phi_1, \phi_2$ is $(-\pi, \pi]$. Note that we employ Mathematica to perform the complex symbolic manipulations that are required in this paper. 

\subsection{Stability condition}
\label{sec:stability}
To make the system workable, stability must be guaranteed. In our case, the system is stable which means that as time goes to infinity, the mean total number of photons within cavities of the two NOPAs must not increase continuously. Mathematically, stability condition holds when matrix $A$ in equation (\ref{eq:dual-NOPA-dynamics-1}) is Hurwitz, that is, real parts of all eigenvalues of $A$ are negative. Based on this, we have the following theorem which states the stability condition of our system with parameters $x$, $y$, $\kappa$, $\alpha$, $\theta_1$ and $\theta_2$.

\begin{theorem}
\label{th: stability} The dual-NOPA coherent feedback system under the influence of losses and phase shifts is stable if and only if 
\small
\begin{equation}
 x y< \frac{\left(1+\frac{y \kappa}{\gamma_r}\right)^2}{\sqrt{\left(1+\frac{y \kappa}{\gamma_r}\right)^2 + \alpha^2 }+\alpha  \left\vert  \cos \frac{\Delta \theta}{2} \right\vert}  \label{eq:stability-cond}
\end{equation}
with  $0<x, y\leq1$, $0\leq \alpha \leq1$ and $\Delta \theta=\theta_1-\theta_2$.
\end{theorem}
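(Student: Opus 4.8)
The plan is to obtain the eigenvalues of $A$ in closed form by exploiting the $2\times2$ block structure that the quadrature description forces on the system, and then impose the Hurwitz condition directly. First I would pass from the complex mode equations (\ref{eq:NOPAS-internal}) to the real realization (\ref{eq:dual-NOPA-dynamics-1}). Grouping each mode into its quadrature pair and using (\ref{eq:quadratures}), a phase shift $e^{\imath\theta}$ acts on such a pair as the planar rotation $R(\theta)=\left(\begin{smallmatrix}\cos\theta & -\sin\theta\\ \sin\theta & \cos\theta\end{smallmatrix}\right)$, while the two-mode squeezing term $\tfrac{\epsilon}{2}(\cdot)^{*}$ acts as $\tfrac{\epsilon}{2}Z$ with $Z=\mathrm{diag}(1,-1)$. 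Writing $\mathbf a_i,\mathbf b_i$ for the quadrature pairs and $\mu=\tfrac{\gamma+\kappa}{2}$, the drift matrix becomes the $4\times4$ array of $2\times2$ blocks
\begin{equation*}
A=\begin{pmatrix} -\mu I & \tfrac{\epsilon}{2}Z & 0 & 0\\ \tfrac{\epsilon}{2}Z & -\mu I & 0 & -\alpha\gamma R(\theta_2)\\ -\alpha\gamma R(\theta_1) & 0 & -\mu I & \tfrac{\epsilon}{2}Z\\ 0 & 0 & \tfrac{\epsilon}{2}Z & -\mu I\end{pmatrix},
\end{equation*}
the loss channels being relegated to $B$.

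Next I would solve $Av=sv$ blockwise. With $\lambda=s+\mu$, the first and last block rows give $\mathbf b_1=\tfrac{2\lambda}{\epsilon}Z\mathbf a_1$ and $\mathbf a_2=\tfrac{2\lambda}{\epsilon}Z\mathbf b_2$; substituting into the two middle rows and writing $P=\tfrac{\epsilon^{2}-4\lambda^{2}}{2\epsilon}$ yields $PZ\mathbf a_1=\alpha\gamma R(\theta_2)\mathbf b_2$ and $\alpha\gamma R(\theta_1)\mathbf a_1=PZ\mathbf b_2$. Eliminating $\mathbf b_2$ and using the identity $ZR(-\theta)Z=R(\theta)$, equivalently $R(\theta_2)^{-1}R(\theta_1)=R(\Delta\theta)$, collapses the whole system to the planar eigenproblem
\begin{equation*}
R(\Delta\theta)\,\mathbf a_1=\frac{P^{2}}{(\alpha\gamma)^{2}}\,\mathbf a_1 .
\end{equation*}
This is the crux of the argument: it is exactly here that the two phase shifts enter only through $\Delta\theta$, which explains the sole appearance of $\Delta\theta$ in (\ref{eq:stability-cond}). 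Since the eigenvalues of $R(\Delta\theta)$ are $e^{\pm\imath\Delta\theta}$, we get $P=\pm\alpha\gamma e^{\pm\imath\Delta\theta/2}$, and hence the eight eigenvalues of $A$ in closed form,
\begin{equation*}
s=-\frac{\gamma+\kappa}{2}\pm\frac12\sqrt{\epsilon^{2}-2\epsilon P},\qquad P\in\{\pm\alpha\gamma e^{\imath\Delta\theta/2},\ \pm\alpha\gamma e^{-\imath\Delta\theta/2}\}.
\end{equation*}

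Finally, $A$ is Hurwitz if and only if the largest of these real parts is negative, i.e. $\max_P\mathrm{Re}\sqrt{\epsilon^{2}-2\epsilon P}<\gamma+\kappa$ for the principal root. Using $\mathrm{Re}\sqrt{\zeta}=\sqrt{(|\zeta|+\mathrm{Re}\,\zeta)/2}$, both $|\zeta|$ and $\mathrm{Re}\,\zeta$ are increased by making $\mathrm{Re}(P)$ as negative as possible, so the maximizing choice is $\mathrm{Re}(P)=-\alpha\gamma|\cos\tfrac{\Delta\theta}{2}|$; this is the origin of the absolute value in (\ref{eq:stability-cond}), and a short computation then gives $\mathrm{Re}\,\zeta=\epsilon^{2}+2\epsilon\alpha\gamma|\cos\tfrac{\Delta\theta}{2}|$ and $|\zeta|=\epsilon\sqrt{\epsilon^{2}+4\epsilon\alpha\gamma|\cos\tfrac{\Delta\theta}{2}|+4\alpha^{2}\gamma^{2}}$. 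Isolating the radical and squaring, which is reversible since both sides stay positive throughout the stable regime, reduces the inequality to $\epsilon\sqrt{(\gamma+\kappa)^{2}+\alpha^{2}\gamma^{2}}+\epsilon\alpha\gamma|\cos\tfrac{\Delta\theta}{2}|<(\gamma+\kappa)^{2}$; dividing by $\gamma^{2}$ and substituting $\epsilon=x\gamma_r$, $\gamma=\gamma_r/y$ and $1+\tfrac{y\kappa}{\gamma_r}=\tfrac{\gamma+\kappa}{\gamma}$ produces exactly (\ref{eq:stability-cond}). The main obstacle is this last stage: verifying that the radical inequality collapses precisely to the stated quotient, tracking which of the four $P$-values and which square-root branch attains the maximal real part, and checking the positivity that makes each squaring an equivalence (this is what secures the ``only if'' direction). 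The degenerate case $\alpha=0$, in which the planar eigenproblem is vacuous and the two NOPAs decouple, should be treated separately and is seen to recover the single-NOPA bound $\epsilon<\gamma+\kappa$, consistent with setting $\alpha=0$ in (\ref{eq:stability-cond}).
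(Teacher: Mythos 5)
Your proposal is correct, and at the top level it follows the same strategy as the paper: reduce stability to the Hurwitz property of $A$, obtain the eight eigenvalues in closed form, maximize their real parts over the branch choices, and unwind the resulting radical inequality into (\ref{eq:stability-cond}). The genuine difference lies in how the eigenvalues are obtained. The paper writes out the explicit $8\times 8$ real matrix and simply asserts the formula $\lambda=-\frac{\gamma+\kappa}{2}\pm\sqrt{\frac{\epsilon^2}{4}\pm\frac{\alpha\epsilon\gamma}{2}\cos\frac{\Delta\theta}{2}\pm\imath\frac{\alpha\epsilon\gamma}{2}\sin\frac{\Delta\theta}{2}}$, relying on Mathematica for the symbolic computation, and then passes to polar form to extract the real parts. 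You instead exploit the $2\times 2$ block structure ($\tfrac{\epsilon}{2}Z$ for the squeezing terms, $-\alpha\gamma R(\theta_i)$ for the interconnection) and eliminate blockwise down to the planar eigenproblem $R(\Delta\theta)\,\mathbf a_1=\frac{P^2}{(\alpha\gamma)^2}\mathbf a_1$; since this forces $P=\pm\alpha\gamma e^{\pm\imath\Delta\theta/2}$, your $s=-\frac{\gamma+\kappa}{2}\pm\frac12\sqrt{\epsilon^2-2\epsilon P}$ reproduces exactly the paper's eigenvalue set, but by a hand-checkable route that also explains structurally why only $\Delta\theta=\theta_1-\theta_2$ survives (the identity $ZR(-\theta_2)Z=R(\theta_2)$ fuses the two rotations into $R(\theta_1-\theta_2)$), a fact the paper can only observe a posteriori from the computed spectrum. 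The remaining steps --- identifying $\operatorname{Re}(P)=-\alpha\gamma\left\vert\cos\frac{\Delta\theta}{2}\right\vert$ as the maximizer via $\operatorname{Re}\sqrt{\zeta}=\sqrt{(|\zeta|+\operatorname{Re}\zeta)/2}$, and the chain of positivity-checked squarings leading to $\epsilon\sqrt{(\gamma+\kappa)^2+\alpha^2\gamma^2}+\epsilon\alpha\gamma\left\vert\cos\frac{\Delta\theta}{2}\right\vert<(\gamma+\kappa)^2$ --- are algebraically equivalent to the paper's (I verified the two final inequalities coincide after squaring), and your separate treatment of the decoupled case $\alpha=0$ covers the endpoint that the theorem statement admits but the blockwise elimination does not reach.
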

\normalsize

\begin{proof}
Based on (\ref{eq:NOPAS-internal}) and (\ref{eq:dual-NOPA-dynamics-1}), we have 

\scalebox{0.9}{%
 \begin{minipage}{1.0\linewidth}
 \begin{displaymath}
A =
\left[\begin{array}{cccccccc}
-\frac{\gamma+\kappa}{2} & 0 & \frac{\epsilon}{2} & 0 & 0 & 0 & 0 & 0 \\
0 & -\frac{\gamma+\kappa}{2} & 0 & -\frac{\epsilon}{2} & 0 & 0 & 0 & 0 \\
\frac{\epsilon}{2} & 0 & -\frac{\gamma+\kappa}{2} & 0 & 0 & 0 & -\alpha \gamma \cos \theta_2 & \alpha \gamma \sin \theta_2 \\
0 & -\frac{\epsilon}{2} & 0 & -\frac{\gamma+\kappa}{2} & 0 & 0 & -\alpha \gamma \sin \theta_2 & -\alpha \gamma \cos \theta_2 \\
-\alpha \gamma \cos \theta_1 & \alpha \gamma \sin \theta_1 & 0 & 0 & -\frac{\gamma+\kappa}{2} & 0 & \frac{\epsilon}{2} & 0 \\
-\alpha \gamma \sin \theta_1 & -\alpha \gamma \cos \theta_1 & 0 & 0 & 0 & -\frac{\gamma+\kappa}{2} & 0 & -\frac{\epsilon}{2} \\
0 & 0 & 0 & 0 & \frac{\epsilon}{2} & 0 & -\frac{\gamma+\kappa}{2} & 0  \\
0 & 0 & 0 & 0 & 0 & -\frac{\epsilon}{2} & 0 & -\frac{\gamma+\kappa}{2} 
\end{array} \right]. 
\end{displaymath}
 \end{minipage}
}\\
\vspace{6pt}
Eigenvalues of the matrix are 
\small
\begin{eqnarray}
\lambda&=& -\frac{\gamma+\kappa}{2} \pm \sqrt{\frac{\epsilon^2}{4} \pm \frac{\alpha \epsilon \gamma}{2} \cos \frac{\Delta \theta}{2} \pm \imath \frac{\alpha \epsilon \gamma}{2} \sin \frac{\Delta \theta}{2}}  \nonumber \\
&=& -\frac{\gamma+\kappa}{2} \pm  \sqrt{r e^{\imath \varphi }},
\end{eqnarray}
\normalsize
where
\small 
\begin{eqnarray}
r &=& \sqrt{\left( \frac{\epsilon^2}{4} \pm \frac{\alpha \epsilon \gamma}{2} \cos \frac{\Delta \theta}{2} \right)^2 + \left( \frac{\alpha \epsilon \gamma}{2} \sin \frac{\Delta \theta}{2} \right)^2},  \nonumber \\
\varphi &=& \arctan {\frac{ \frac{\alpha \epsilon \gamma}{2} \sin \frac{\Delta \theta}{2} }{\frac{\epsilon^2}{4} \pm \frac{\alpha \epsilon \gamma}{2} \cos \frac{\Delta \theta}{2} }} + 2k \pi, k \in \mathbb{Z}. 
\end{eqnarray}
\normalsize
Real parts of the eigenvalues are 
\small
\begin{eqnarray}
\operatorname{real}(\lambda) &=& -\frac{\gamma+\kappa}{2} \pm  \sqrt{r} \cos \frac{\varphi}{2} \nonumber \\
&=& -\frac{\gamma+\kappa}{2} \pm  \sqrt{r} \sqrt{\frac{1+ \cos \varphi}{2}} \nonumber \\
&=& -\frac{\gamma+\kappa}{2} \pm \frac{1}{2} \left(\frac{\epsilon^2}{2} \pm \alpha \epsilon \gamma \cos \frac{\Delta \theta}{2}  + \sqrt{\frac{\epsilon^4}{4} + \alpha^2 \epsilon^2 \gamma^2  \pm \alpha \epsilon^3 \gamma \cos \frac{\Delta \theta}{2}} \right)^{\frac{1}{2}}. 
\end{eqnarray}
\normalsize
Hence,
\small
\begin{eqnarray}
&&\max (\operatorname{real}(\lambda))=  -\frac{\gamma+\kappa}{2} + \frac{1}{2} \left(\frac{\epsilon^2}{2} + \alpha \epsilon \gamma \left\vert \cos \frac{\Delta \theta}{2} \right \vert + \sqrt{\frac{\epsilon^4}{4} + \alpha^2 \epsilon^2 \gamma^2  + \alpha \epsilon^3 \gamma \left\vert \cos \frac{\Delta \theta}{2}\right\vert }  \right)^{\frac{1}{2}}
\end{eqnarray}
\normalsize
Stability holds when $\max (\operatorname{real}(\lambda))<0$. By solving the inequality with $\epsilon = x \gamma_r$, $\gamma = \frac{\gamma_r}{y}$, $0<x, y, \alpha \leq 1$, the theorem is obtained. 
\end{proof}

The theorem directly shows that, stability of the system is only impacted by the difference between values of $\theta_1$ and $\theta_2$, not by the values of $\theta_1$ and $\theta_2$ individually. However, as $0 \leq  \left\vert  \cos \frac{\Delta \theta}{2} \right\vert \leq 1$ and $\alpha$ has positive value, as long as the system without phase shifts is stable, the system maintains stability in the presence of phase shifts due to the transmission distance.

\subsection{Effect of phase shifts on EPR entanglement of a lossless system}
\label{sec:entanglement-lossless}
In this part, we investigate the effect of the phase shifts on the EPR entanglement between $\xi_{out,a,2}$ and $\xi_{out,b,1}$ when the system has no transmission losses ($\alpha=1$) and no amplification losses ($\kappa=0$). Based on Section \ref{sec:system-model}, we obtain the two-mode squeezing spectra between the two outgoing fields of the dual-NOPA coherent feedback system $V_\pm(0, \theta_1, \theta_2, \phi)$ as a function of $\theta_1$, $\theta_2$ and $\phi=\phi_1+\phi_2$ at $\omega=0$ when $\alpha=1$ and $\kappa=0$,
\small
\begin{eqnarray}
&& V_\pm(0, \theta_1, \theta_2, \phi)=2\frac{b_1+2b_2\cos \left(\frac{\theta_1-\theta_2}{2} \right)\cos \left( \frac{\theta_1+\theta_2}{2}+\phi\right)+b_3 \cos\left(\theta_1-\theta_2 \right)}{b_4-b_3\cos \left(\theta_1-\theta_2\right)}, \nonumber \\  \label{eq:V_lossless}
\end{eqnarray}
\normalsize
where
\begin{eqnarray}
b_1 &=& \epsilon ^8 + 12 \epsilon ^6 \gamma ^2 - 10\epsilon ^4 \gamma ^4 + 12\epsilon ^2 \gamma ^6 + \gamma ^8 , \nonumber \\
b_2 &=& 4\epsilon \gamma (\epsilon ^2 - \gamma ^2)(\epsilon ^2 + \gamma ^2)^2,\nonumber \\
b_3 &=& 8\epsilon ^2 \gamma ^2 (\epsilon ^2 - \gamma ^2)^2,\nonumber \\
b_4 &=& \epsilon ^8 - 4\epsilon ^6 \gamma ^2 + 22\epsilon ^4 \gamma ^4 - 4\epsilon ^2 \gamma ^6 + \gamma ^8.\label{eq:b1-5}
\end{eqnarray}

What is of our interest is whether $\theta_1$ and $\theta_2$ decrease the degree of EPR entanglement; if they do, whether $\phi$ can recover the original EPR entanglement (when $\theta_1=\theta_2=0$) or at least improve the EPR entanglement, as well as how much EPR entanglement can be improved. To this end, we define the following functions at $\omega=0$,
\begin{itemize}
\item $V_\pm^{nops}$, the two-mode squeezing spectra between the two outgoing fields of the dual-NOPA coherent feedback system without phase shifts. That is, $V_\pm(0, \theta_1, \theta_2, \phi)$ is as in (\ref{eq:V_lossless}) at $\theta_1=\theta_2=\phi_1=\phi_2=0$;
\item $V_\pm^{ps}(\theta_1,\theta_2)$, the two-mode squeezing spectra between the two outgoing fields of the dual-NOPA coherent feedback system under the effect of the phase shifts $\theta_1$ and $\theta_2$, but without $\phi_1$ and $\phi_2$. That is, $V_\pm(0, \theta_1, \theta_2, \phi)$ is as in (\ref{eq:V_lossless}) at $\phi=0$;
\item $V_\pm(\phi)$, the two-mode squeezing spectra between the two outgoing fields of the dual-NOPA coherent feedback system under  the effect of phase shifts $\phi_1$ and $\phi_2$ with fixed values of $\theta_1$ and $\theta_2$. That is, $V_\pm(0, \theta_1, \theta_2, \phi)$ is as in (\ref{eq:V_lossless}) for fixed $\theta_1$ and $\theta_2$;
\item $f(\theta_1,\theta_2)=V_\pm^{ps}(\theta_1,\theta_2)- V_\pm^{nops}$. If $\theta_1$ and $\theta_2$ degrade the EPR entanglement, then $f(\theta_1,\theta_2)>0$;
\item $g(\phi)=V_\pm(\phi)- V_\pm^{nops}$. If the EPR entanglement degraded by a fixed value of $\theta_1$ and $\theta_2$ is fully recovered by $\phi_1$ and $\phi_2$, then $g(\phi)=0$;
\item $h(\phi)=V_\pm(\phi)- V_\pm^{ps}$. If the EPR entanglement impacted by $\theta_1$ and $\theta_2$ is improved by $\phi_1$ and $\phi_2$, then $h(\phi)<0$.
\end{itemize}
\vspace{6pt}
\subsubsection{A simple case ($\theta_1=\theta_2=\theta$)}
\label{sec:simple_case}
Let us begin with a simple case, where phase shifts $\theta_1=\theta_2=\theta$.
According to (\ref{eq:V_lossless}), we have
\begin{eqnarray}
f(\theta,\theta) &=& \frac{4b_2(\cos (\theta)-1)}{b_4-b_3}.  \label{eq:f_simple_case}
\end{eqnarray}
Based on the stability condition (\ref{eq:stability-cond}), here system is stable when $xy<\sqrt{2}-1$, that is, $\epsilon< (\sqrt{2}-1))\gamma$, hence $b_2<0$. Moreover $b_4-b_3=(\epsilon^2-2\epsilon \gamma- \gamma^2)^2(\epsilon^2+ 2\epsilon\gamma-\gamma^2)^2 >0$. Therefore $f(\theta, \theta) \geq 0$ (equality holds when $\theta=0$), which implies EPR entanglement worsens in the presence of phase shifts $\theta_1$ and $\theta_2$.
Now we examine the effect of $\phi$. We have
\begin{eqnarray}
g(\phi) &=& \frac{4b_2(\cos (\theta+\phi)-1)}{b_4-b_3},  \label{eq:g_simple_case}
\end{eqnarray}
we can see that as long as $\phi= \lbrace -\theta, \pm\pi -\theta \rbrace$, the EPR entanglement is fully recovered.
\vspace{6pt}


\subsubsection{General case}
\label{sec:general_case_lossless}
Here we consider the lossless system in a general situation, where phase shifts $\theta_1$ and $\theta_2$ can be different. 

Let $\kappa=0$, $\alpha=1$, $m=\frac{\theta_1-\theta_2}{2}$, $n=\frac{\theta_1+\theta_2}{2}$, $m, n, m+n, n-m \in (-\pi, \pi]$ and $\phi=\phi_1+\phi_2$, $\phi\in (-2\pi, 2\pi]$. Then (\ref{eq:V_lossless}) can be written as
\begin{eqnarray}
&&V_\pm(0, m, n, \phi) =  2\frac{b_1+2b_2\cos \left(m \right)\cos \left( n +\phi\right)+b_3 \cos\left(2m \right)}{b_4-b_3\cos \left(2m\right)} \label{eq:V_lossless_mn}
\end{eqnarray}
where $b_i$ ($i=1,2,3,4,5$) is as in (\ref{eq:b1-5}).

Analysing (\ref{eq:V_lossless_mn}) gives the lemmas below.
\begin{lemma}
\label{le: f_lossless} 
The presence of the phase shifts $\theta_1\neq0$ and $\theta_2\neq0$ degrades the two-mode squeezing spectra ($V_\pm^{ps}(m,n)>V_\pm^{nops}$), thus degree of EPR entanglement becomes worse or EPR entanglement may vanish.
\end{lemma}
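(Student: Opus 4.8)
\textbf{Proof proposal for Lemma \ref{le: f_lossless}.}

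The plan is to set $\phi=0$ in the expression (\ref{eq:V_lossless_mn}) to obtain $V_\pm^{ps}(m,n)$, and then study the sign of the difference $f(\theta_1,\theta_2)=V_\pm^{ps}(m,n)-V_\pm^{nops}$. With $\phi=0$ we have
\small
\begin{equation*}
V_\pm^{ps}(m,n)=2\,\frac{b_1+2b_2\cos m\cos n+b_3\cos 2m}{b_4-b_3\cos 2m},
\end{equation*}
\normalsize
while $V_\pm^{nops}$ is recovered by further setting $m=n=0$, giving $V_\pm^{nops}=2(b_1+2b_2+b_3)/(b_4-b_3)$. The first step is therefore to form $f=V_\pm^{ps}(m,n)-V_\pm^{nops}$ over the common denominator $(b_4-b_3\cos 2m)(b_4-b_3)$ and simplify the numerator symbolically (I would let Mathematica carry this out, consistent with the paper's stated methodology).

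The key observation that makes the sign tractable is the factorisation already recorded in the simple case: from Section \ref{sec:simple_case} we have $b_4-b_3=(\epsilon^2-2\epsilon\gamma-\gamma^2)^2(\epsilon^2+2\epsilon\gamma-\gamma^2)^2>0$, and more generally the stability condition (\ref{eq:stability-cond}) forces $\epsilon$ to be small relative to $\gamma$, which fixes the signs of the $b_i$. Specifically, under the stability regime one checks that $b_2<0$, $b_3>0$, and that the denominator $b_4-b_3\cos 2m$ stays positive for all admissible $m$ (it is bounded below by $b_4-b_3>0$ since $b_3>0$ and $\cos 2m\le 1$). Thus the second step is to record these sign facts and reduce the claim $f>0$ to showing that the simplified numerator is nonnegative, vanishing only when $\theta_1=\theta_2=0$.

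The third step is the sign analysis of the numerator itself. I expect it to factor into a product of a manifestly nonnegative trigonometric piece (something like a combination of $(1-\cos m)$, $(1-\cos 2m)$, and $(1-\cos m\cos n)$ terms, all nonnegative on the admissible ranges) times the positive quantity $b_4-b_3>0$, with the negative coefficient $b_2$ arranged so that the overall sign comes out positive. Establishing that the trigonometric factor is nonnegative on the full parameter box $m,n,m+n,n-m\in(-\pi,\pi]$, and isolating exactly when it is zero, is the delicate part: one must verify that equality forces $\cos m=1$ and $\cos n=1$ simultaneously, i.e. $m=n=0$, hence $\theta_1=\theta_2=0$.

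The main obstacle will be controlling the sign of that trigonometric numerator uniformly over the two-dimensional parameter region rather than along the diagonal $m=0$ treated in Section \ref{sec:simple_case}; the cross term $2b_2\cos m\cos n$ mixes $m$ and $n$ and has the ``wrong'' sign ($b_2<0$), so the argument cannot simply bound each summand independently. I anticipate completing the square or exploiting the stability-enforced inequality $\epsilon^2+2\epsilon\gamma-\gamma^2<0$ to dominate the cross term by the positive contributions from $b_1$ and $b_3\cos 2m$, thereby pinning the sign and the equality case. Once the numerator is shown to be nonnegative with the stated zero set, the lemma follows immediately, and the qualitative conclusion (EPR entanglement worsens or may vanish once $V^{ps}_\pm>V^{nops}_\pm$ pushes $V$ past the threshold $4$ in (\ref{eq:entanglement-criterion})) is then a direct consequence of the entanglement criterion.
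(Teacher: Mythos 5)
Your proposal sets up the same object $f(m,n)=V_\pm^{ps}(m,n)-V_\pm^{nops}$ and the same sign facts ($b_2<0$, $b_3>0$, $b_4-b_3>0$, denominator bounded below by $b_4-b_3$), but from there it takes a genuinely different route. The paper does \emph{not} analyse the sign of the numerator over the whole parameter box. It exploits the fact that $f$ is a smooth periodic function of $(m,n)$, so its global minimum over $\mathbb{R}^2$ is attained at a stationary point; it computes $\partial f/\partial m$ and $\partial f/\partial n$, finds (with Mathematica) that the stationary points are $(0,0)$, $(0,\pi)$, $(\pm\tfrac{\pi}{2},\tfrac{\pi}{2})$, $(-\tfrac{\pi}{2},-\tfrac{\pi}{2})$, and verifies $f\ge0$ at this finite list using the additional factorisation $b_1b_3+b_3b_4+b_2b_3+b_2b_4=4\epsilon\gamma(\epsilon^2-\gamma^2)(\epsilon^2+2\epsilon\gamma-\gamma^2)^2(\epsilon^2+\gamma^2)^4<0$. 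That trades your uniform two-dimensional estimate for a handful of pointwise evaluations.

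The gap in your plan is exactly at your third step, and your anticipated structure of the numerator is not what materialises. Over the common denominator the numerator reduces to
\begin{equation*}
N=-\bigl(b_1+b_4+2b_2\bigr)\,b_3\,(1-\cos 2m)\;+\;2\,b_2\,(b_4-b_3)\,(\cos m\cos n-1),
\end{equation*}
and since $b_1+b_4+2b_2=2(\epsilon^2+\gamma^2)^2(\epsilon^2+2\epsilon\gamma-\gamma^2)^2\ge0$, this is a \emph{difference} of two nonnegative quantities rather than a manifestly nonnegative product: the first term works against the claim. Minimising over $n$ at fixed $m$ (worst case $\cos n=\operatorname{sign}\cos m$, then $|\cos m|\to1$) shows that $N\ge0$ forces the coefficient inequality $|b_2|(b_4-b_3)\ge 2(b_1+b_4+2b_2)b_3$, which after substituting (\ref{eq:b1-5}) and writing $u=\gamma^2-\epsilon^2$, $v=2\epsilon\gamma$ reduces to $2uv(u^2+v^2)(u-v)^4\ge0$ --- true, but with equality precisely on the stability boundary $\epsilon^2+2\epsilon\gamma-\gamma^2=0$, so there is no slack with which to ``dominate the cross term'' crudely; the equality case $(m,n)=(0,0)$ also has to be extracted from this tight inequality. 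None of this is supplied in your proposal, and it is the entire content of the proof rather than a routine simplification to delegate to Mathematica. Either carry out that reduction explicitly (it does close, and arguably more informatively than the paper's argument), or fall back on the paper's stationary-point enumeration.
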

\begin{proof}
Based on the functions defined at beginning of this subsection, we have 
\small
\begin{eqnarray}
f(m,n) &=& 2\frac{b_1+ 2b_2 \cos (m) \cos(n)+ b_3 \cos(2m)}{b_4-b_3\cos (2m)}-2\frac{b_1+ 2b_2+ b_3 }{b_4-b_3}, \label{eq:f_lossless} \\
\frac{\partial f}{\partial n}(m,n) &=& \frac{-4 b_2 \cos(m) \sin(n)}{b_4 - b_3 \cos(2 m)}, \label{eq:f1_lossless} \\
\frac{\partial f}{\partial m}(m,n) &=& -4 \sin(m) \frac{ b_2 \cos(n) (b_4 + b_3 \sin(m)^2)}{(b_4 - b_3 \cos(2 m))^2}-4 \sin(m) \frac{ 3 b_2 b_3 \cos(m)^2 \cos(n) }{(b_4 - b_3 \cos(2 m))^2} \nonumber \\
&& \quad -4 \sin(m) \frac{2 b_3 (b_1 + b_4) \cos(m)}{(b_4 - b_3 \cos(2 m))^2}. \label{eq:f2_lossless}  
\end{eqnarray}
\normalsize
$f(m,n)$ is a periodic continuous twice differentiable function with variables $m$ and $n$, and it is convenient to take the range of $m$ and $n$ to be the entire real line. Hence global minima of $f$ must be stationary points. Therefore for $m, n, m+n, n-m \in (-\pi, \pi]$, global minima of $f(m,n)$ are stationary points as well. With the help of Mathematica, we obtain that the first order partial derivatives of $f$ with respect to variables $m$ and $n$ vanish at $(m,n)=\lbrace(0, 0), (0, \pi), (\pm \frac{\pi}{2}, \frac{\pi}{2}), (- \frac{\pi}{2}, -\frac{\pi}{2}) \rbrace$. The values of $f$ at these stationary points are
\small
\begin{eqnarray}
&&f(m,n)= \left\{ \begin{array}{l}
0, \quad \textrm{if $(m,n)=(0, 0)$, that is, $\theta_1=\theta_2=0$} \\
\frac{-8b_2}{b_4-b_3}, \quad \textrm{if $(m,n)=(0, \pi)$} \\
\frac{-4(b_1b_3+b_3b_4+b_2b_3+b_2b_4)}{(b_4-b_3)(b_4+b_3)}, \quad \textrm{if $(m,n)=\lbrace(\pm \frac{\pi}{2}, \frac{\pi}{2}), (- \frac{\pi}{2}, -\frac{\pi}{2}) \rbrace$}.
\end{array} \right.
\end{eqnarray}
\normalsize
Again, when the system is stable, we have as before that $b_2<0$, $b_3>0$ and $b_4-b_3 >0$. Moreover, Mathematica shows that $b_1b_3+b_3b_4+b_2b_3+b_2b_4=4\epsilon\gamma(\epsilon^2-\gamma^2)(\epsilon^2+2\epsilon\gamma-\gamma^2)^2(\epsilon^2+\gamma^2)^4<0$. Thus, $f(m,n)=V_\pm^{ps}(m,n)- V_\pm^{nops} \geq 0$. Equality holds when $(m,n)=(0, 0)$, which is the case with no phase shifts. We obtain Lemma \ref{le: f_lossless}.
\end{proof}
\begin{lemma}
\label{le: phi_lossless} 
When $m \neq \pm \frac{\pi}{2}$, $\phi$ ensures the existence of EPR entanglement and minimizes the EPR  entanglement reduction caused by $\theta_1$ and $\theta_2$ if its value is set as $\phi_0$ \footnote[1]{Eventhough there is more than one minimum $\phi_0$, the function $V_\pm^{im}(m)$ takes the same value for all the minima.},
\begin{eqnarray}
\phi_0 = \left\{ \begin{array}{ll}
-n & \textrm{if $m \in (-\frac{\pi}{2}, \frac{\pi}{2})$}\\
\pm \pi - n & \textrm{if $m \in (-\pi, -\frac{\pi}{2})  \cup (\frac{\pi}{2}, \pi] $}. 
\end{array} \right.\label{eq: phi0_lossless}
\end{eqnarray}
In particular, when $m = \lbrace 0, \pi \rbrace$, $\phi_0$ fully recovers the EPR entanglement. However, if $m=\pm \frac{\pi}{2}$, $\phi$ has no effect on the system and EPR entanglement vanishes.
\end{lemma}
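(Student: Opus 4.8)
The plan is to exploit that in (\ref{eq:V_lossless_mn}) the tunable phase $\phi$ enters only through the single numerator term $2b_2\cos(m)\cos(n+\phi)$, while the denominator $b_4-b_3\cos(2m)$ does not depend on $\phi$ at all. First I would record that under the stability condition of Theorem \ref{th: stability} the sign facts already used in the proof of Lemma \ref{le: f_lossless} hold, namely $b_2<0$, $b_3>0$ and $b_4-b_3>0$; hence $b_4-b_3\cos(2m)\ge b_4-b_3>0$ for every $m$, so the denominator is strictly positive. Minimizing $V_\pm(0,m,n,\phi)$ over $\phi$ therefore reduces to the scalar problem of minimizing $2b_2\cos(m)\cos(n+\phi)$, which in turn minimizes $g(\phi)=V_\pm(\phi)-V_\pm^{nops}$, i.e. the EPR-entanglement reduction.

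Since $\cos(n+\phi)\in[-1,1]$ and $b_2<0$, the minimizer is dictated purely by the sign of $\cos(m)$. When $\cos(m)>0$, i.e. $m\in(-\tfrac{\pi}{2},\tfrac{\pi}{2})$, the term is minimized by $\cos(n+\phi)=1$, giving $\phi=-n$; when $\cos(m)<0$, i.e. $m\in(-\pi,-\tfrac{\pi}{2})\cup(\tfrac{\pi}{2},\pi]$, it is minimized by $\cos(n+\phi)=-1$, giving $\phi=\pm\pi-n$. This is exactly the prescription (\ref{eq: phi0_lossless}) for $\phi_0$, and the two admissible values in the second case account for the footnote on multiple minimizers. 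When $\cos(m)=0$, i.e. $m=\pm\tfrac{\pi}{2}$, the $\phi$-dependent term vanishes identically, so $\phi$ has no influence on $V_\pm$, which settles that assertion. For the full-recovery claim I would substitute $\phi_0$ at $m\in\{0,\pi\}$: in both cases $\cos(m)\cos(n+\phi_0)=1$ and $\cos(2m)=1$, so the numerator collapses to $b_1+2b_2+b_3$ and the denominator to $b_4-b_3$, giving $V_\pm=2(b_1+2b_2+b_3)/(b_4-b_3)=V_\pm^{nops}$.

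For the vanishing claim at $m=\pm\tfrac{\pi}{2}$ I would use $\cos(2m)=-1$, reducing (\ref{eq:V_lossless_mn}) to $V_\pm=2(b_1-b_3)/(b_4+b_3)$, and then invoke the algebraic identity $b_1-b_4=2b_3$, which is read off from (\ref{eq:b1-5}) since $b_1-b_4=16\epsilon^2\gamma^2(\epsilon^2-\gamma^2)^2=2b_3$ (conveniently verified in Mathematica). This gives $b_1-b_3=b_4+b_3$, hence $V_\pm=2$ and $V=V_++V_-=4$ for every $\phi$, so the strict criterion $V<4$ is never satisfied and the entanglement vanishes.

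The remaining and most delicate point is the existence statement: that for $m\ne\pm\tfrac{\pi}{2}$ the optimal $\phi_0$ actually yields $V<4$. Evaluating at $\phi_0$ gives the minimal value $V_\pm^{min}=2\big(b_1-2|b_2||\cos m|+b_3\cos(2m)\big)/\big(b_4-b_3\cos(2m)\big)$, and using $b_1-b_4=2b_3$ with $\cos(2m)=2\cos^2 m-1$ one finds that $V=2V_\pm^{min}<4$ is equivalent to $2b_3|\cos m|<|b_2|$. Writing $t=\epsilon/\gamma$ (with $0<t<1$ under stability, so $b_2<0$), this becomes $4t(1-t^2)|\cos m|<(t^2+1)^2$. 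The crux is the factorization $(t^2+1)^2-4t(1-t^2)=(t^2+2t-1)^2\ge 0$: since $|\cos m|\le 1$ and $t(1-t^2)>0$, it yields $(t^2+1)^2-4t(1-t^2)|\cos m|\ge (t^2+2t-1)^2\ge 0$, with strictness because either $|\cos m|<1$, or $|\cos m|=1$ forces $m\in\{0,\pi\}$ where Theorem \ref{th: stability} gives $t<\sqrt2-1$ so that $t^2+2t-1\ne 0$. I expect this inequality, which ties the explicit coefficients $b_i$ to the stability bound, to be the main obstacle; the rest is direct substitution. Together these steps establish all parts of the lemma.
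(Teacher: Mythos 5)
Your proof is correct and reaches every conclusion of the lemma, but by a route that differs from the paper's in two places. For the choice of $\phi_0$, the paper works with $h(\phi)=V_\pm(\phi)-V_\pm^{ps}$ and runs a first/second-derivative test to classify the stationary points $\phi=-n$ and $\phi=\pm\pi-n$ according to the sign of $\cos m$; your observation that $\phi$ enters $V_\pm(0,m,n,\phi)$ only through the single term $2b_2\cos(m)\cos(n+\phi)$ over a $\phi$-independent positive denominator yields the same $\phi_0$ directly as a \emph{global} minimizer, which is both shorter and slightly stronger than the paper's local analysis. The more substantial divergence is in the existence claim that $V_\pm^{im}(m)<2$ for $m\neq\pm\frac{\pi}{2}$: the paper proves it by treating $V_\pm^{im}(m)$ as a function of $m$, finding its stationary points $\lbrace 0,\pi\rbrace$ with Mathematica, evaluating there and at the non-differentiable points $\pm\frac{\pi}{2}$, and sandwiching $V_\pm^{nops}\leq V_\pm^{im}(m)\leq 2$; you instead reduce the claim to the explicit scalar inequality $4t(1-t^2)\left\vert\cos m\right\vert<(1+t^2)^2$ with $t=\epsilon/\gamma$, settled by the factorization $(1+t^2)^2-4t(1-t^2)=(t^2+2t-1)^2$ plus the stability bound $t<\sqrt{2}-1$ in the boundary case $\left\vert\cos m\right\vert=1$. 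Your identity $b_1-b_4=2b_3$ (which also gives $V_\pm=2$ at $m=\pm\frac{\pi}{2}$, a value the paper simply records) checks out against (\ref{eq:b1-5}), and your factorization is precisely the square $(\epsilon^2+2\epsilon\gamma-\gamma^2)^2$ visible in the paper's expression for $b_4-b_3$. The trade-off: the paper's argument additionally delivers the monotonicity picture of $V_\pm^{im}(m)$ used in its figures and in the lossy analogue (Lemma \ref{le: Vim_loss}), while yours is hand-verifiable, avoids the symbolic stationary-point search, and makes explicit that stability is exactly what excludes the degenerate case $t=\sqrt{2}-1$ when $m\in\lbrace 0,\pi\rbrace$.
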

\begin{proof}
We have 
\small
\begin{eqnarray}
h(\phi) &=& \frac{4 b_2 \cos (m)(\cos(n + \phi)-\cos n )}{b_4-b_3\cos (2m)}, \label{eq:h_lossless} \\
h^{(1)}(\phi) &=& \frac{-4b_2\cos (m)\sin (n+\phi)}{b_4-b_3\cos (2m)}, \label{eq:h1_lossless} \\
h^{(2)}(\phi) &=& \frac{-4b_2\cos (m)\cos (n+\phi)}{b_4-b_3\cos (2m)}. \label{eq:h2_lossless}
\end{eqnarray}
\normalsize
The first derivative $ h^{(1)}(\phi)$ vanishes at  $\phi = \lbrace -n, \pm \pi -n \rbrace$.  As $b_2<0$ and $b_4-b_3\cos (2m) \geq b_4-b_3 >0$, we get
\begin{eqnarray}
h^{(2)}(-n) &=& \frac{-4b_2\cos (m)}{b_4-b_3\cos (2m)}\left\{ \begin{array}{ll}
 >0 & \textrm{if $m \in (-\frac{\pi}{2}, \frac{\pi}{2})$} \nonumber\\
 <0 & \textrm{if $m \in (-\pi, -\frac{\pi}{2})  \cup (\frac{\pi}{2}, \pi] $} ,
\end{array} \right.\\
h^{(2)}(\pm\pi-n) &=&\frac{4b_2\cos (m)}{b_4-b_3\cos (2m)} \left\{ \begin{array}{ll}
<0 & \textrm{if $m \in (-\frac{\pi}{2}, \frac{\pi}{2})$} \\
>0 & \textrm{if $m \in (-\pi, -\frac{\pi}{2})  \cup (\frac{\pi}{2}, \pi] $}. 
\end{array} \right.
\end{eqnarray}
Thus a local minimizer of $h(\phi)$ is
\begin{eqnarray}
\phi_0 = \left\{ \begin{array}{ll}
-n & \textrm{if $m \in (-\frac{\pi}{2}, \frac{\pi}{2})$} \\
\pm\pi-n & \textrm{if $m \in (-\pi, -\frac{\pi}{2})  \cup (\frac{\pi}{2}, \pi] $},  
\end{array} \right.
\end{eqnarray}
at which
\small
\begin{eqnarray}
h(-n) &=& \frac{4 b_2 \cos m (1-\cos n )}{b_4-b_3\cos (2m)}<0,  \textrm{if $m \in (-\frac{\pi}{2}, \frac{\pi}{2})$}, \nonumber\\
h(\pm \pi -n) &=& \frac{4 b_2 \cos m (-1-\cos n )}{b_4-b_3\cos (2m)}<0, \textrm{if $m \in (-\pi, -\frac{\pi}{2})  \cup (\frac{\pi}{2}, \pi] $}. 
\end{eqnarray}
\normalsize
We conclude that the above values of $\phi_0$ minimize the two-mode squeezing spectra influenced by phase shifts $\theta_1$ and $\theta_2$， when $m \neq \pm \frac{\pi}{2}$.
At $m= \pm \frac{\pi}{2}$, the term containing $\phi$ in $h(\phi)$ becomes $0$, thus $\phi$ has no impact on the two-mode squeezing spectra.

Define $V_\pm^{im}(m)$ as the two-mode squeezing spectra between the outputs of the system when $\phi=\phi_0$ with respect to variable $m$, according to (\ref{eq:V_lossless_mn}) and (\ref{eq: phi0_lossless}),
\small
\begin{eqnarray}
&&V_\pm^{im}(m) =\left\{ \begin{array}{ll}
2\frac{b_1+2b_2\cos \left(m \right)+b_3 \cos\left(2m \right)}{b_4-b_3\cos \left(2m\right)} & \textrm{if $m \in (-\frac{\pi}{2}, \frac{\pi}{2})$} \\
2\frac{b_1-2b_2\cos \left(m \right)+b_3 \cos\left(2m \right)}{b_4-b_3\cos \left(2m\right)} & \textrm{if $m \in (-\pi, -\frac{\pi}{2})  \cup (\frac{\pi}{2}, \pi] $}\\
2 & \textrm{if $m = \pm \frac{\pi}{2}$}. 
\end{array} \right. \label{eq:Vim_lossless}
\end{eqnarray}
\normalsize
Denote the first derivative of $V_\pm^{im}(m)$ as $V_\pm^{im(1)}(m)$. By applying Mathematica to solve $V_\pm^{im(1)}(m)=0$ based on (\ref{eq:b1-5}), we obtain that stationary points of $V_\pm^{im}(m)$ are $0$ and $\pi$. Values of $V_\pm^{im}(m)$ at the stationary points and the non-differentiable points $m= \pm \frac{\pi}{2}$ are
\small
\begin{eqnarray}
&&V_\pm^{im}(m)= \left\{ \begin{array}{ll}
2\frac{b_1+2b_2+b_3}{b_4-b_3}=V_\pm^{nops}, & \textrm{if $m= \lbrace 0, \pi \rbrace $} \\
2, & \textrm{if $m=\pm \frac{\pi}{2}$}, 
\end{array} \right.
\end{eqnarray}
\normalsize
hence, $V_\pm^{nops} \leq V_\pm^{im}(m) \leq 2$, which implies that at $m=\lbrace 0,  \pi\rbrace$, $\phi$ fully recovers the original EPR entanglement; at $m=\pm \frac{\pi}{2}$, $\phi$ has no effect on the EPR entanglement and the EPR entanglement vanishes; in remaining cases of $m$, $\phi$ improves the EPR entanglement impacted by $\theta_1$ and $\theta_2$ but cannot fully recover the EPR entanglement.
\end{proof}

Fig. \ref{fig: Vps_f_lossless} and Fig. \ref{fig: Vim_lossless} illustrate an example of the lossless dual-NOPA coherent feedback system undergoing phase shifts with $x=0.4$ and $y=1$, according to values reported in \cite{SN2014}. Note that in all the figures of two-mode squeezing spectra in the rest of the paper, values of squeezing spectra are given in dB unit, that is,  $V_\pm ({\rm dB})$  $=10\log_{10}(V_\pm)$. Hence, EPR entanglement exists when $V_\pm <10\log_{10}(2)=3.0103$ dB based on (\ref{eq:entanglement-criterion}) and the EPR entanglement is stronger as $V_\pm ({\rm dB})$ is more negative.

In Fig. \ref{fig: Vps_f_lossless}, the left plot shows that at some values of $m$ and $n$, $V_\pm^{ps}(m,n) >3.0103$ dB, which implies that phase shifts in the paths of the system can lead to death of EPR entanglement. The right plot shows the difference between values of $V_\pm^{ps}(m,n)$ and $V_\pm^{nops}$. When $(m,n) \neq (0,0)$, we see that $V_\pm^{ps}(m,n)>V_\pm^{nops}$, which indicates phase shifts in the paths between two NOPAs degrade the EPR entanglement.
\begin{figure}[htbp!]
\begin{center}
\includegraphics[scale=0.61]{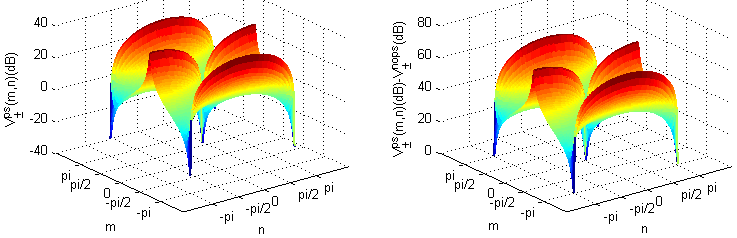}
\caption{Plots of $V_\pm^{ps}(m,n)({\rm dB})$ (left) and  $V_\pm^{ps}(m,n)({\rm dB})-V_\pm^{nops}({\rm dB})$ (right) of the lossless dual-NOPA coherent feedback system with $x=0.4$, $y=1$, $\alpha=1$, $\kappa=0$ and $\phi=0$.  }\label{fig: Vps_f_lossless}
\end{center}
\end{figure}
\begin{figure}[htbp!]
\begin{center}
\includegraphics[scale=0.67]{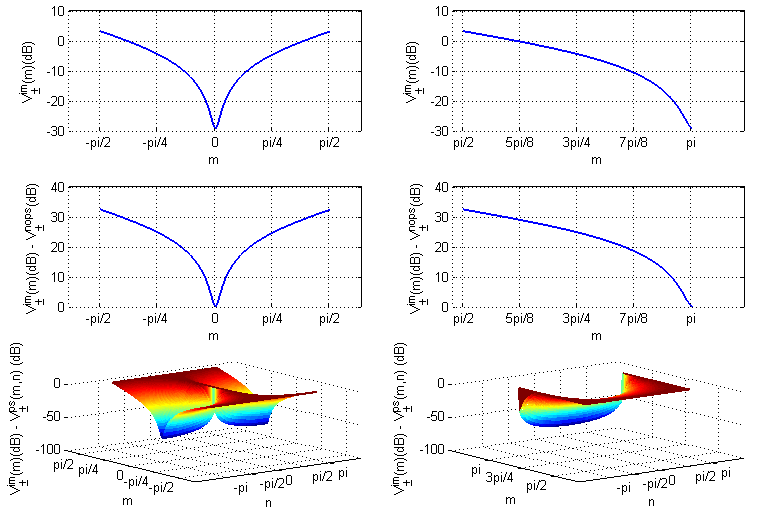}
\caption{Plots of $V_\pm^{im}(m)({\rm dB})$ (top row), $V_\pm^{im}(m)({\rm dB})-V_\pm^{nops}({\rm dB})$ (middle row) and $V_\pm^{im}(m)({\rm dB})-V_\pm^{ps}(m,n)({\rm dB})$ (bottom row)  of the lossless dual-NOPA coherent feedback system with $x=0.4$, $y=1$, $\alpha=1$ and $\kappa=0$. Ranges of values of $m$ are $[-\frac{\pi}{2}, \frac{\pi}{2}]$ (left column) and $[\frac{\pi}{2}, \pi]$ (right column).}\label{fig: Vim_lossless}
\end{center}
\end{figure}

Fig. \ref{fig: Vim_lossless} shows the effects of $\phi_0$ on the two-mode squeezing spectra. Note that as $V_\pm(0,m,n,\phi)$ is an even function of $m$, plots of $V_\pm(0,m,n,\phi)$ over intervals $(-\pi, -\frac{\pi}{2}]$ and $[\frac{\pi}{2}, \pi]$ of $m$ are symmetric, thus we do not show the plots of squeezing spectra versus varying values of $m$ ranging from $-\pi$ to $-\frac{\pi}{2}$.
The plots of $V_\pm^{im}(m)$ against the parameter $m$ in the top row shows that EPR entanglement exists ($V_\pm^{im}(m)<3.0103$ dB) over the range of $m$, except for $m=\pm \frac{\pi}{2}$ ($V_\pm^{im}(\pm \frac{\pi}{2})=3.0103$ dB). 
The middle row illustrates the original EPR entanglement is fully recovered by $\phi_0$ at $m = \lbrace 0, \pi \rbrace$.
The bottom row displays the difference between values of $V_\pm^{im}(m)$ and $V_\pm^{ps}(m,n)$ against the parameters $m$ and $n$. We see that the difference value is not positive which implies that, $\phi_0$ improves the two-mode squeezing spectra in most scenarios, but does not impact the system when $(m=\pm \frac{\pi}{2}, n\in(-\pi, \pi])$, $(m \in (-\frac{\pi}{2}, \frac{\pi}{2}), n=0)$ and $(m\in (-\pi, -\frac{\pi}{2})\cup(\frac{\pi}{2},\pi], n=\pi)$. Note that based on Lemma \ref{le: phi_lossless}, in the first case where $(m=\pm \frac{\pi}{2}, n\in(-\pi, \pi])$, the adjustable phase shifters at the outputs do not impact the EPR entanglement of the system for any values of $\phi$; however for $(m\in (-\pi, -\frac{\pi}{2})\cup(\frac{\pi}{2},\pi], n=\pi)$, though $\phi_0$ does not have an effect on the EPR entanglement impacted by $\theta_1$ and $\theta_2$, $\phi=\phi_0$ is the best choice based on the proof of Lemma \ref{le: phi_lossless}.

\subsection{Effect of phase shifts on EPR entanglement of the dual-NOPA coherent feedback system with losses}
\label{sec:entanglement-loss}
Now let us investigate the performance of the dual-NOPA coherent feedback system under the presence of phase shifts, transmission losses and amplification losses. 

Let $m=\frac{\theta_1-\theta_2}{2}$, $n=\frac{\theta_1+\theta_2}{2}$, $m, n, m+n, n-m \in (-\pi, \pi]$ and $\phi=\phi_1+\phi_2$, $\phi\in (-2\pi, 2\pi]$,  the two-mode squeezing spectra between the two outputs in the dual-NOPA coherent feedback system under the effect of phase shifts and losses is
\small
\begin{eqnarray}
&& V_\pm(0, m, n, \phi) =2\frac{c_1+2c_2\cos \left(m \right)\cos \left( n+\phi \right)+c_3 \cos\left(2m \right)}{c_4-c_5\cos \left(2m \right)}, \label{eq: V_loss}
\end{eqnarray} 
\normalsize where
\small
\begin{eqnarray}
c_1 &=& 4 \alpha^4 \epsilon^2 \gamma^2 (\kappa^4 + 8 \kappa \epsilon^2 \gamma + (\epsilon^2 - \gamma^2)^2 - 2 \kappa^2 (\epsilon^2 + \gamma^2)) + \alpha^2 (-\kappa^2 + \epsilon^2 + \gamma^2)^2 (\kappa^4   \nonumber \\
&& + 4 \kappa^3 \gamma - 2 \kappa^2 (\epsilon^2 - 3 \gamma^2) + 4 \kappa \gamma (\epsilon^2 + \gamma^2) + \epsilon^4 + 2 (1 + 2 \beta^2) \epsilon^2 \gamma^2 + \gamma^4)  \nonumber \\
&&  + (\kappa^2 - \epsilon^2 + 2 \kappa \gamma + \gamma^2)^2 (4 \gamma (\kappa + \gamma) (\kappa^2 + \epsilon^2 + \kappa \gamma)+ \beta^2 (-\kappa^2 + \epsilon^2 + \gamma^2)^2), \nonumber \\
c_2 &=& 4 \alpha \epsilon \gamma (-\kappa^2 + \epsilon^2 + \gamma^2) (-\kappa^4 - 4 \kappa^3 \gamma - 6 \kappa^2 \gamma^2  + 4 \alpha^2 \kappa \epsilon^2 \gamma + \epsilon^4  - 4 \kappa \gamma^3 - \gamma^4),\nonumber \\
c_3 &=& 8 \alpha^2 \epsilon^2 \gamma^2 (-\kappa + \epsilon - \gamma)  (\kappa + \epsilon + \gamma) (3 \kappa^2  + 2 \kappa \gamma + \epsilon^2 - \gamma^2),\nonumber \\
c_4 &=& \kappa^8  + 8 \kappa^7 \gamma - 4 \kappa^6 (\epsilon^2 - 7 \gamma^2) - 8 \kappa^5 (3 \epsilon^2 \gamma - 7 \gamma^3)+ \kappa^4 (6 \epsilon^4 - 60 \epsilon^2 \gamma^2 + 70 \gamma^4)  \nonumber \\
&&  + 8 \kappa^3 (3 \epsilon^4 \gamma - 10 \epsilon^2 \gamma^3 + 7 \gamma^5) - 4 \kappa^2 (\epsilon^2 - 7 \gamma^2) (\epsilon^2 - \gamma^2)^2 + 8 \kappa \gamma (-\epsilon^2 + \gamma^2)^3 \nonumber \\
&&  + \epsilon^8 - 4 \epsilon^6 \gamma^2 + 2 (3 + 8 \alpha^4) \epsilon^4 \gamma^4- 4 \epsilon^2 \gamma^6 + \gamma^8,\nonumber \\
c_5 &=& 8 \alpha^2 \epsilon^2 \gamma^2 (\kappa^2 + 2 \kappa \gamma - \epsilon^2  + \gamma^2)^2.\label{eq: c1-c5}
\end{eqnarray}
\normalsize

Similar to Section \ref{sec:general_case_lossless}, we have the following lemmas.
\begin{lemma}
\label{le: f_loss} 
The presence of the phase shifts $\theta_1\neq0$ and $\theta_2\neq0$ degrades the two-mode squeezing spectra ($V_\pm^{ps}(m,n)>V_\pm^{nops}$), thus degree of EPR entanglement becomes worse or EPR entanglement may vanish.
\end{lemma}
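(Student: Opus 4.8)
The plan is to follow closely the proof of Lemma~\ref{le: f_lossless}, replacing the lossless spectrum (\ref{eq:V_lossless_mn}) and its coefficients (\ref{eq:b1-5}) by the lossy spectrum (\ref{eq: V_loss}) with coefficients (\ref{eq: c1-c5}). First I would form the difference
\[
f(m,n)=V_\pm^{ps}(m,n)-V_\pm^{nops}=2\frac{c_1+2c_2\cos(m)\cos(n)+c_3\cos(2m)}{c_4-c_5\cos(2m)}-2\frac{c_1+2c_2+c_3}{c_4-c_5},
\]
i.e. the $\phi=0$ restriction of (\ref{eq: V_loss}) minus its value at $(m,n)=(0,0)$. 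Because the numerator and denominator are trigonometric polynomials in $(m,n)$ and, once Theorem~\ref{th: stability} holds, the denominator $c_4-c_5\cos(2m)$ stays bounded away from zero, $f$ is a periodic, continuous, twice differentiable function whose global minima must be stationary points. I would then compute $\partial f/\partial n$ and $\partial f/\partial m$. Since (\ref{eq: V_loss}) carries exactly the same $\cos(m)\cos(n)$ and $\cos(2m)$ dependence as (\ref{eq:V_lossless_mn}), I expect $\partial f/\partial n$ to again be a multiple of $c_2\cos(m)\sin(n)$, vanishing when $m=\pm\frac{\pi}{2}$ or $n\in\{0,\pi\}$, and the joint stationary set to coincide with that of Lemma~\ref{le: f_lossless}, namely $(m,n)=\{(0,0),(0,\pi),(\pm\frac{\pi}{2},\frac{\pi}{2}),(-\frac{\pi}{2},-\frac{\pi}{2})\}$, which I would confirm with Mathematica.

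Next I would evaluate $f$ at these stationary points. At $(0,0)$ one gets $f=0$; at $(0,\pi)$ the numerator changes by $-4c_2$, giving $f=-8c_2/(c_4-c_5)$; and at the three points with $\cos(m)=0$, $\cos(2m)=-1$ a short computation yields
\[
f=\frac{-4\,(c_1c_5+c_2c_4+c_2c_5+c_3c_4)}{(c_4-c_5)(c_4+c_5)}.
\]
These are precisely the lossy analogues of the three values found in Lemma~\ref{le: f_lossless}, with $c_5$ now in the role that $b_3$ played inside the denominator. It then remains to show $f\ge 0$ at every stationary point, with equality only at $(0,0)$.

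The decisive step, and the \emph{main obstacle}, is to establish the signs of the relevant coefficient combinations throughout the stability region of Theorem~\ref{th: stability}. Concretely I would need $c_4-c_5>0$ and $c_4+c_5>0$, so that the denominators are positive and $V_\pm^{nops}$ is well defined; $c_2<0$, so that $f(0,\pi)>0$; and $c_1c_5+c_2c_4+c_2c_5+c_3c_4<0$, so that $f$ is positive at the diagonal points. In the lossless case these reduced to the clean statements $b_2<0$, $b_3>0$, $b_4-b_3>0$ and an explicit factored product. Here, because $c_1,\dots,c_5$ are high-degree polynomials in $\epsilon,\gamma,\kappa,\alpha,\beta$ subject to $\alpha^2+\beta^2=1$, and because the stability boundary (\ref{eq:stability-cond}) is genuinely multi-parameter, the factorizations are much messier. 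My plan is to factor each combination in Mathematica and then argue that, under (\ref{eq:stability-cond}) together with $0<x,y\le 1$, $0\le\alpha\le 1$ and $\kappa>0$, every factor keeps a definite sign — in particular using that stability forces $\epsilon$ small relative to $\gamma$ (the analogue of the lossless reduction $\epsilon<(\sqrt{2}-1)\gamma$) to fix the sign of the factors carrying $\epsilon^2-\gamma^2$ and $-\kappa^2+\epsilon^2+\gamma^2$. Once these sign facts are in hand, the conclusion $f(m,n)\ge 0$ with equality only at $(m,n)=(0,0)$, i.e. $V_\pm^{ps}(m,n)>V_\pm^{nops}$ for $(m,n)\ne(0,0)$, follows exactly as in Lemma~\ref{le: f_lossless}.
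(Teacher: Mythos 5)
Your proposal follows essentially the same route as the paper's proof: the same difference function $f(m,n)$, the same stationary-point set $\lbrace(0,0),(0,\pi),(\pm\frac{\pi}{2},\frac{\pi}{2}),(-\frac{\pi}{2},-\frac{\pi}{2})\rbrace$, the same values of $f$ there, and the same reduction to the sign facts $c_4-c_5>0$, $c_2<0$ and $c_1c_5+c_2c_4+c_2c_5+c_3c_4<0$, which the paper likewise settles by Mathematica factorization (writing $c_4-c_5$ as a perfect square and $c_2$ and the combined expression as products of explicitly sign-definite factors $d_1>0$, $d_2<0$, $d_3>0$ using $\gamma\geq\epsilon\geq\alpha\epsilon$). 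The step you flag as the main obstacle is resolved in the paper exactly as you propose, so the argument is correct and matches the paper's.
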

\begin{proof}
Based on the functions defined at the beginning of Section \ref{sec:entanglement-loss}, we have 
\small
\begin{eqnarray}
f(m,n) &=& 2\frac{c_1+ 2c_2 \cos (m) \cos(n)+ c_3 \cos(2m)}{c_4-c_5\cos (2m)} -2\frac{c_1+ 2c_2+ c_3 }{c_4-c_5}, \label{eq:f_loss} \\
\frac{\partial f}{\partial n}(m,n) &=& \frac{-4 c_2 \cos(m) \sin(n)}{c_4 - c_5 \cos(2 m)}, \label{eq:f1_loss} \\
\frac{\partial f}{\partial m}(m,n) &=& -4 \sin(m) \frac{c_2 \cos(n) (c_4 + c_5 \sin(m)^2)}{(c_4 -c_5 \cos(2 m))^2} -4 \sin(m) \frac{ 3 c_2 c_5 \cos(m)^2 \cos(n) }{(c_4 - c_5 \cos(2 m))^2} \nonumber \\
&& \quad -4 \sin(m) \frac{2 (c_3 c_4 + c_1 c_5) \cos(m)}{(c_4 - c_5 \cos(2 m))^2}. \label{eq:f2_loss}  
\end{eqnarray}
\normalsize
Similar to the proof in Section \ref{sec:general_case_lossless}, global minima of $f(m,n)$ are stationary points. As given by Mathematica, the first order partial derivatives of $f$ with respect to the variable $m$ and $n$ vanish at $(m,n)=\lbrace(0, 0), (0, \pi), (\pm \frac{\pi}{2}, \frac{\pi}{2}), (- \frac{\pi}{2}, -\frac{\pi}{2}) \rbrace$, at which values of $f(m.n)$ are
\small
\begin{eqnarray}
&&f(m,n) = \left\{ \begin{array}{l}
0, \quad \textrm{if $(m,n)=(0, 0)$, that is, $\theta_1=\theta_2=0$} \\
\frac{-8c_2}{c_4-c_5}, \quad \textrm{if $(m,n)=(0, \pi)$} \\
\frac{-4(c_1c_5+c_3c_4+c_2c_4+c_2c_5)}{(c_4-c_5)(c_4+c_5)}, \quad \textrm{if $(m,n)=\lbrace(\pm \frac{\pi}{2}, \frac{\pi}{2}), (- \frac{\pi}{2}, -\frac{\pi}{2}) \rbrace$}.
\end{array} \right.
\end{eqnarray}
\normalsize
Based on stability condition (\ref{eq:stability-cond}), replacing $\epsilon$, $\gamma$ and $\kappa$ in (\ref{eq: c1-c5}) with definitions $\epsilon=x \gamma_r$, $\gamma=\frac{\gamma_r}{y}$, $\kappa=\frac{3 \times 10^6}{\sqrt{2} \times 0.6}x$, $0<x,y,\alpha\leq1$, $\gamma_r=7.2 \times 10^7$ and noting $\gamma\geq \epsilon\geq \alpha\epsilon$, Mathematica gives that 
\small
\begin{eqnarray}
&&c_4-c_5=\left(\kappa^4 + \epsilon^4 + 4 \kappa^3 \gamma - 2 (1 + 2 \alpha^2) \epsilon^2 \gamma^2 + \gamma^4 - 2 \kappa^2  (\epsilon^2 - 3 \gamma^2)+ \kappa (-4 \epsilon^2 \gamma + 4 \gamma^3)\right)^2,\nonumber \\
&&c_2 = 4\alpha\epsilon\gamma d_1 d_2,\nonumber\\
&&c_1c_5+c_3c_4+c_2c_4+c_2c_5=4 \alpha \epsilon \gamma (-\kappa^2 + \epsilon^2 - 2 \kappa \gamma + 2 \alpha \epsilon \gamma - \gamma^2)^2d_1d_2d_3,
\end{eqnarray}
where
\begin{eqnarray}
&&d_1=\epsilon^2 +\gamma^2-\kappa^2=\left(5.1715\times 10^{15} x^2 + \frac{5.184\times 10^{15}}{y^2}\right)>0, \nonumber\\
&&d_2=-(\gamma^4-\epsilon^4 )-4 \kappa \gamma(\gamma^2-\alpha^2 \epsilon^2) - 6 \kappa^2 \gamma^2  - 4 \kappa^3 \gamma -\kappa^4  <0, \nonumber \\
&&d_3=(\kappa^2+\gamma^2-\epsilon^2)^2+4\kappa^2\gamma^2+4\alpha^2\epsilon^2\gamma^2+4\kappa^3\gamma+4\kappa\gamma(\gamma^2-\epsilon^2)>0. \label{eq:d1-d3}
\end{eqnarray}
\normalsize
We see that $c_4-c_5>0$, $c_2<0$ and $c_1c_5+c_3c_4+c_2c_4+c_2c_5<0$. Therefore, $f(m,n)\geq 0$, that is, $V_\pm^{ps}(m,n) \geq V_\pm^{nops}$. Equality holds when $(m,n)=(0, 0)$, which is the case with no phase shifts. We obtain Lemma \ref{le: f_loss}.
\end{proof}
\begin{lemma}
\label{le: phi_loss} 
$\phi$ minimizes the two-mode squeezing spectra at $\omega=0$ impacted by $\theta_1$ and $\theta_2$ if its value is set as
\begin{eqnarray}
\phi_0 = \left\{ \begin{array}{ll}
-n & \textrm{if $m \in (-\frac{\pi}{2}, \frac{\pi}{2})$}\\
\pm \pi - n & \textrm{if $m \in (-\pi, -\frac{\pi}{2})  \cup (\frac{\pi}{2}, \pi] $},
\end{array} \right.
\end{eqnarray} 
However, when $m=\pm \frac{\pi}{2}$, $\phi$ has no effect on the system.
\end{lemma}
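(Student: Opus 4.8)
The plan is to mirror the proof of Lemma~\ref{le: phi_lossless} almost verbatim, exploiting the fact that (\ref{eq: V_loss}) has exactly the same structure as (\ref{eq:V_lossless_mn}) with $b_i$ replaced by $c_i$ and $b_3$ replaced by $c_5$ in the denominator. The first observation is that $\phi$ enters $V_\pm(0,m,n,\phi)$ only through the single term $2c_2\cos(m)\cos(n+\phi)$ in the numerator, whereas the denominator $c_4-c_5\cos(2m)$ is independent of $\phi$. Writing $h(\phi):=V_\pm(\phi)-V_\pm^{ps}$, I would therefore compute
\begin{eqnarray}
h(\phi) &=& \frac{4c_2\cos(m)(\cos(n+\phi)-\cos n)}{c_4-c_5\cos(2m)}, \nonumber \\
h^{(1)}(\phi) &=& \frac{-4c_2\cos(m)\sin(n+\phi)}{c_4-c_5\cos(2m)}, \nonumber \\
h^{(2)}(\phi) &=& \frac{-4c_2\cos(m)\cos(n+\phi)}{c_4-c_5\cos(2m)}.
\end{eqnarray}
Since $V_\pm^{ps}$ does not depend on $\phi$, minimizing $V_\pm(\phi)$ is the same as minimizing $h(\phi)$, so the optimization reduces to a one-dimensional problem.

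Next I would locate the stationary points. For $\cos m\neq0$, setting $h^{(1)}(\phi)=0$ forces $\sin(n+\phi)=0$, yielding the candidate set $\phi\in\{-n,\ \pm\pi-n\}$ within the admissible range. To classify these I need the signs of the coefficients, all of which are already established in the proof of Lemma~\ref{le: f_loss}: under the stability condition (\ref{eq:stability-cond}) one has $c_2<0$ and $c_4-c_5>0$, and from (\ref{eq: c1-c5}) the quantity $c_5=8\alpha^2\epsilon^2\gamma^2(\kappa^2+2\kappa\gamma-\epsilon^2+\gamma^2)^2\geq0$ is manifestly nonnegative. Together these guarantee $c_4-c_5\cos(2m)\geq c_4-c_5>0$ for every $m$, so the denominator is strictly positive throughout and the second-derivative test is unobstructed.

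The classification then runs exactly as in the lossless case. Because $-4c_2>0$ and the denominator is positive, $h^{(2)}(-n)=\frac{-4c_2\cos m}{c_4-c_5\cos(2m)}$ is positive when $\cos m>0$ (that is, $m\in(-\frac{\pi}{2},\frac{\pi}{2})$) and negative when $\cos m<0$ (that is, $m\in(-\pi,-\frac{\pi}{2})\cup(\frac{\pi}{2},\pi]$), while $h^{(2)}(\pm\pi-n)=\frac{4c_2\cos m}{c_4-c_5\cos(2m)}$ carries the opposite sign; hence the minimizer is $\phi_0=-n$ on $(-\frac{\pi}{2},\frac{\pi}{2})$ and $\phi_0=\pm\pi-n$ on $(-\pi,-\frac{\pi}{2})\cup(\frac{\pi}{2},\pi]$, precisely the claimed expression. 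For the degenerate case $m=\pm\frac{\pi}{2}$ we have $\cos m=0$, so the $\phi$-dependent term vanishes identically in $h(\phi)$ and $\phi$ has no effect, which closes the statement. I anticipate no genuine obstacle here: the single point requiring care is certifying positivity of $c_4-c_5\cos(2m)$ uniformly in $m$, and that is immediate once $c_5\geq0$ and $c_4-c_5>0$ are imported from Lemma~\ref{le: f_loss}.
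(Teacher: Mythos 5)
Your proposal is correct and follows essentially the same route as the paper: the paper computes the same $h(\phi)$, $h^{(1)}(\phi)$, $h^{(2)}(\phi)$ and then simply invokes the classification argument from the lossless Lemma~\ref{le: phi_lossless}, which you have written out explicitly using $c_2<0$, $c_4-c_5>0$ and $c_5\geq 0$. No gaps.
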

\begin{proof}
We have 
\small
\begin{eqnarray}
h(\phi) &=& \frac{4 c_2 \cos m (\cos(n + \phi)-\cos n )}{c_4-c_5\cos (2m)}, \label{eq:h_loss} \\
h^{(1)}(\phi) &=& \frac{-4c_2\cos (m)\sin (n+\phi)}{c_4-c_5\cos (2m)}, \label{eq:h1_loss} \\
h^{(2)}(\phi) &=& \frac{-4c_2\cos (m)\cos (n+\phi)}{c_4-c_5\cos (2m)}. \label{eq:h2_loss} 
\end{eqnarray}
\normalsize
As proof of Lemma \ref{le: phi_lossless}, we obtain a local minimizer $\phi_0$.
\end{proof}
\begin{lemma}
\label{le: Vim_loss} 
Define $V_\pm^{im}(m)$, the two-mode squeezing spectra at $\omega=0$ of the outgoing fields in the dual-NOPA coherent feedback system with $\phi=\phi_0$ as a function of $m\,$\footnote[2]{Eventhough there is more than one minimum $\phi_0$, the function $V_\pm^{im}(m)$ takes the same value for all minima.},
\small
\begin{eqnarray}
&&V_\pm^{im}(m)= \left\{ \begin{array}{ll}
2\frac{c_1+2c_2\cos \left(m \right)+c_3 \cos\left(2m \right)}{c_4-c_5\cos \left(2m\right)} & \textrm{if $m \in (-\frac{\pi}{2}, \frac{\pi}{2})$} \\
2\frac{c_1-2c_2\cos \left(m \right)+c_3 \cos\left(2m \right)}{c_4-c_5\cos \left(2m\right)} & \textrm{if $m \in (-\pi, -\frac{\pi}{2})  \cup (\frac{\pi}{2}, \pi] $} \\
2\frac{c_1-c_3}{c_4+c_5} & \textrm{if $m = \pm \frac{\pi}{2}$}.
\end{array}  \right. \label{eq: Vim_loss}
\end{eqnarray}
\normalsize
$V_\pm^{im}(m)=2$ has four real roots denoted by $\pm m_1$ and $\pm m_2$, with $m_2 \geq m_1 \geq 0$.
EPR entanglement under the influence of phase shifts and losses exists on intervals $(-\pi, -m_2)$,  $(-m_1, m_1)$ and $(m_2, \pi]$. The original EPR entanglement impacted by $\theta_1$ and $\theta_2$ is fully recovered by $\phi_0$ if $m= \lbrace 0, \pi \rbrace$. Also, EPR entanglement is improved as value of $m$ approaches $\lbrace 0, \pm\pi \rbrace$.
\end{lemma}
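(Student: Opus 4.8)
The plan is to mirror the argument used for the lossless case (Lemma~\ref{le: phi_lossless}), now carrying the heavier coefficients $c_1,\dots,c_5$ through. First I would obtain the three-branch expression (\ref{eq: Vim_loss}) directly by substituting the minimizer $\phi_0$ of Lemma~\ref{le: phi_loss} into (\ref{eq: V_loss}): on $m\in(-\tfrac{\pi}{2},\tfrac{\pi}{2})$ one has $\cos(n+\phi_0)=\cos 0=1$, on the outer intervals $\cos(n+\phi_0)=\cos(\pm\pi)=-1$, and at $m=\pm\tfrac{\pi}{2}$ both $\cos m=0$ and $\cos 2m=-1$, which collapses the numerator and denominator to $c_1-c_3$ and $c_4+c_5$. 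A quick check that the one-sided limits at $m=\pm\tfrac{\pi}{2}$ both equal $2(c_1-c_3)/(c_4+c_5)$ shows $V_\pm^{im}$ is continuous on $(-\pi,\pi]$, and since each branch depends on $m$ only through $\cos m$ and $\cos 2m$, the function is even in $m$; this justifies restricting attention to $m\in[0,\pi]$.

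Next I would evaluate the endpoints. At $m=0$ the central branch gives $2(c_1+2c_2+c_3)/(c_4-c_5)=V_\pm^{nops}$, and at $m=\pi$ the outer branch likewise reduces to $V_\pm^{nops}$; this establishes the full-recovery claim at $m=\{0,\pi\}$. To locate the zeros of $V_\pm^{im}(m)-2$, I would clear the denominator on each branch and substitute $\cos 2m=2\cos^2 m-1$, reducing the equation $V_\pm^{im}(m)=2$ to the quadratic $2(c_3+c_5)\cos^2 m\pm 2c_2\cos m+(c_1-c_3-c_4-c_5)=0$ in $\cos m$, the sign of the middle term being fixed by the branch. The map $\cos m\mapsto-\cos m$ carries one branch quadratic into the other, so together with evenness the real solutions occur in the pattern $\pm m_1,\pm m_2$; I would then use Mathematica, after inserting $\epsilon=x\gamma_r$, $\gamma=\gamma_r/y$, $\kappa=\tfrac{3\times10^6}{\sqrt 2\times0.6}x$ and the stability constraint, to verify that the relevant discriminants are positive and that exactly one admissible root in $\cos m$ lands in $(0,1]$ per branch, giving roots ordered as $0\le m_1\le m_2$.

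Finally I would pin down the sign of $V_\pm^{im}(m)-2$ on each subinterval. Using the factorization technique of Lemma~\ref{le: f_loss}, I expect Mathematica to confirm $c_4-c_5>0$, $c_4+c_5>0$, $V_\pm^{nops}<2$, and $V_\pm^{im}(\pm\tfrac{\pi}{2})=2(c_1-c_3)/(c_4+c_5)>2$ throughout the admissible parameter ranges; combined with the absence of interior stationary points other than $0$ and $\pi$ (checked by solving $V_\pm^{im(1)}(m)=0$ as in the lossless case), this forces $V_\pm^{im}$ to increase strictly from $V_\pm^{nops}$ to a value above $2$ on $[0,\tfrac{\pi}{2})$ and decrease strictly back on $(\tfrac{\pi}{2},\pi]$. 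Hence $V_\pm^{im}(m)<2$ exactly on $(-m_1,m_1)$, $(m_2,\pi]$ and, by evenness, $(-\pi,-m_2)$, which is the stated entanglement region, while the monotonicity toward the minimum $V_\pm^{nops}$ at $m=\{0,\pm\pi\}$ yields the improvement claim. The main obstacle is the same one encountered before: the $c_i$ are degree-eight polynomials in $(\epsilon,\gamma,\kappa)$ with no transparent factorization, so every sign assertion---positivity of the discriminant, $V_\pm^{im}(\pm\tfrac{\pi}{2})>2$, and uniqueness of the interior stationary points---must be forced through the explicit parametrization and the stability bound with Mathematica, precisely the step where the clean $b_i$ factorizations of the lossless proof are no longer available.
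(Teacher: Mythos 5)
Your proposal is correct and follows essentially the same route as the paper's proof: evaluate the branches of $V_\pm^{im}$ at $m=\{0,\pi\}$ and $m=\pm\frac{\pi}{2}$, show $2(c_1-c_3)/(c_4+c_5)>2$ under the stability constraint, and use the sign of $V_\pm^{im(1)}(m)$ (no stationary points other than $0$ and $\pi$) to get piecewise strict monotonicity, from which the four roots $\pm m_1,\pm m_2$ and the entanglement intervals follow. Your additional reduction of $V_\pm^{im}(m)=2$ to a quadratic in $\cos m$ is a valid but redundant supplement, since the monotonicity-plus-intermediate-value argument you also invoke (and which the paper uses, via the explicit factorizations $d_4<0$, $d_5,d_6>0$) already pins down the root count and ordering.
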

\begin{proof}
The first derivative of $V_\pm^{im}(m)$ is
\small
\begin{eqnarray}
V_\pm^{im(1)}(m)= \left\{ \begin{array}{ll}
-4\frac{\left(2 \left(c_3 c_4 + c_1 c_5\right) \cos (m) + c_2 \left(c_4 + 2 c_5 + c_5 \cos(2 m)\right)\right) \sin(m)}{\left(c_4-c_5\cos (2m)\right)^2}, & \textrm{if $m \in (-\frac{\pi}{2}, \frac{\pi}{2})$} \\
4\frac{\left(-2 \left(c_3 c_4 + c_1 c_5\right) \cos (m) + c_2 \left(c_4 + 2 c_5 + c_5 \cos(2 m)\right)\right) \sin(m)}{\left(c_4-c_5\cos (2m)\right)^2}, & \textrm{if $m \in (-\pi, -\frac{\pi}{2})  \cup (\frac{\pi}{2}, \pi] $}. 
\end{array} \right.
\end{eqnarray}
\normalsize
Employing (\ref{eq: c1-c5}) and solving $V_\pm^{im(1)}(m)=0$ via Mathematica, we obtain that stationary points of $V_\pm^{im}(m)$ are $0$ and $\pi$. Values of $V_\pm^{im}(m)$ at stationary points and non-differentiable points $\pm \frac{m}{2}$ are
\small
\begin{eqnarray}
&&V_\pm^{im}(m)= \left\{ \begin{array}{ll}
2\frac{c_1+2c_2+c_3}{c_4-c_5}=V_\pm^{nops}, & \textrm{if $m= \lbrace 0, \pi \rbrace $} \\
2\frac{c_1-c_3}{c_4+c_5}, & \textrm{if $m=\pm \frac{\pi}{2}$}.
\end{array} \right.
\end{eqnarray}
\normalsize
Noting $0<x, y, \alpha\leq1$ and $d_3$ in (\ref{eq:d1-d3}). Mathematica shows that
\small
\begin{eqnarray}
&&(c_1-c_3)-(c_4+c_5)=8\epsilon^2 \gamma \left((1+ \alpha^2)\kappa+(1- \alpha^2)\gamma\right) d_3>0.
\end{eqnarray}
\normalsize
Hence, $2\frac{c_1-c_3}{c_4+c_5}>2$. Consequently, global minima of $V_\pm^{im}(m)$ are at $m= \lbrace 0, \pi \rbrace $ at which the original EPR entanglement is fully recovered. 

Recall $d_1$ and $d_2$ from (\ref{eq:d1-d3}). Mathematica then gives
\small
\begin{eqnarray}
&&V_\pm^{im(1)}(m)= \left\{ \begin{array}{ll}
\frac{-d_4\sin(m)}{d_5}, & \textrm{if $m \in (-\frac{\pi}{2}, \frac{\pi}{2})$} \\
\frac{d_4\sin(m)}{d_6}, & \textrm{if $m \in (-\pi, -\frac{\pi}{2})  \cup (\frac{\pi}{2}, \pi] $}. 
\end{array} \right.
\end{eqnarray}
\normalsize
where
\small
\begin{eqnarray}
&&d_4=16\alpha\epsilon\gamma d_1 d_2 <0, \nonumber\\
&&d_5=\left(\kappa^4 + \epsilon^4 + 4 \kappa^3 \gamma - 2 \epsilon^2 \gamma^2 + 4 \alpha^2 \epsilon^2 \gamma^2 + \gamma^4 - 2 \kappa^2 (\epsilon^2 - 3 \gamma^2) + \kappa (-4 \epsilon^2 \gamma + 4 \gamma^3)\right.\nonumber \\
&&\quad \left. + 4 \alpha \epsilon \gamma (\kappa^2 - \epsilon^2 + 2 \kappa \gamma + \gamma^2) \cos(m)\right)^2>0, \nonumber\\
&&d_6=\left(\kappa^4 + \epsilon^4 + 4 \kappa^3 \gamma - 2 \epsilon^2 \gamma^2 + 4 \alpha^2 \epsilon^2 \gamma^2 + \gamma^4 - 2 \kappa^2 (\epsilon^2 - 3 \gamma^2) + \kappa (-4 \epsilon^2 \gamma + 4 \gamma^3)\right.\nonumber \\
&&\quad \left. - 4 \alpha \epsilon \gamma (\kappa^2 - \epsilon^2 + 2 \kappa \gamma + \gamma^2) \cos(m)\right)^2>0.
\end{eqnarray}
\normalsize
Therefore,
\small
\begin{eqnarray}
&&V_\pm^{im(1)}(m)= \left\{ \begin{array}{ll}
>0, & \textrm{if $m \in (-\pi, -\frac{\pi}{2})\cup (0, \frac{\pi}{2})$} \\
<0, & \textrm{if $m \in (-\frac{\pi}{2},0) \cup (\frac{\pi}{2}, \pi)$}, 
\end{array} \right.
\end{eqnarray}
\normalsize
implies that $V_\pm^{im}(m)$ is a piecewise monotonically increasing function on intervals $(-\pi, -\frac{\pi}{2})\cup (0, \frac{\pi}{2})$ and a piecewise monotonically decreasing function over $ (-\frac{\pi}{2},0) \cup (\frac{\pi}{2}, \pi)$. It approaches the maximum value $2\frac{c_1-c_3}{c_4+c_5}>2$ at $m=\pm \frac{\pi}{2}$ and minimum value equals to $V_\pm^{nops}$ when $m= \lbrace 0, \pi \rbrace $. Hence, the even function $V_\pm^{im}(m)$ of $m$ has four real roots denoted by $\pm m_1$ and $\pm m_2$ with $0\leq m_1 \leq m_2$. $V_\pm^{im}(m)<2$ on intervals  $(-\pi, -m_2)$,  $(m_1, m_1)$ and $(m_2, \pi]$. Proof is completed.
\end{proof}

\begin{figure}[htbp!]
\begin{center}
\includegraphics[scale=0.55]{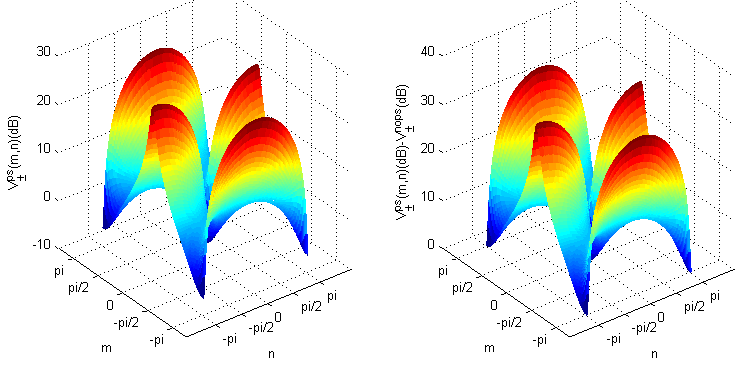}
\caption{Plots of $V_\pm^{ps}(m,n)({\rm dB})$ (left) and  $V_\pm^{ps}(m,n)({\rm dB})-V_\pm^{nops}({\rm dB})$ (right) of the dual-NOPA coherent feedback system with $x=0.4$, $y=1$, $\alpha=0.95$, $\kappa=\frac{3 \times 10^6}{\sqrt{2} \times 0.6}x$ and $\phi=0$. }\label{fig: Vps_f_loss}
\end{center}
\end{figure}
\begin{figure}[htbp!]
\begin{center}
\includegraphics[scale=0.52]{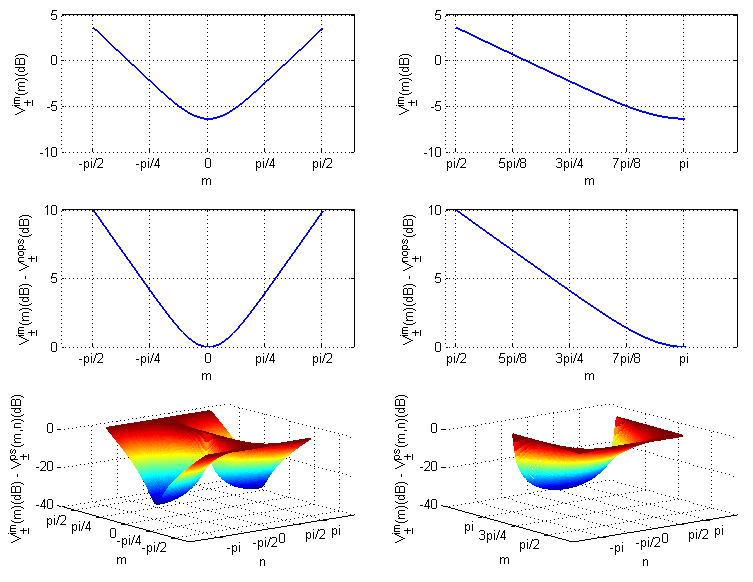}
\caption{Plots of $V_\pm^{im}(m)({\rm dB})$ (top row), $V_\pm^{im}(m)({\rm dB})-V_\pm^{nops}({\rm dB})$ (middle row) and $V_\pm^{im}(m)({\rm dB})-V_\pm^{ps}(m,n)({\rm dB})$ (bottom row) of the dual-NOPA coherent feedback system with $x=0.4$, $y=1$, $\alpha=0.95$ and $\kappa=\frac{3 \times 10^6}{\sqrt{2} \times 0.6}x$.  Ranges of values of $m$ are $[-\frac{\pi}{2}, \frac{\pi}{2}]$ (left column) and $[\frac{\pi}{2}, \pi]$ (right column). }\label{fig: Vim_loss}
\end{center}
\end{figure}

Fig. \ref{fig: Vps_f_loss} and Fig. \ref{fig: Vim_loss} illustrate an example of the dual-NOPA coherent feedback system undergoing both phase shifts and losses with $x=0.4$, $y=1$ and $\alpha=0.95$. Similar to Fig. \ref{fig: Vps_f_lossless}, the left plot in Fig. \ref{fig: Vps_f_loss} shows that EPR entanglement vanishes at some values of $m$ and $n$. The right plot shows that the non-zero phase shifts in the paths decrease the degree of EPR entanglement.

Fig. \ref{fig: Vim_loss} illustrates the effect of $\phi_0$. Based on symmetric property of function $V_\pm(0, m, n, \phi)$, we can see from Fig. \ref{fig: Vim_loss} that the top row shows that under the effect of $\phi_0$, for some values of $m$ near $\pm \frac{\pi}{2}$ there is no EPR entanglement between the two outgoing fields ($V_\pm^{im}(m) \geq 3.0103$ dB);
the middle row shows the original EPR entanglement is fully recovered at $m = \lbrace 0, \pi \rbrace$ and the bottom row shows that $\phi_0$ improves the two-mode squeezing spectra except for the cases where ($m=\pm \frac{\pi}{2}, n\in(-\pi, \pi])$, $(m\in(-\frac{\pi}{2}, \frac{\pi}{2}), n=0)$ and $(m\in (-\pi, -\frac{\pi}{2})\cup(\frac{\pi}{2}, \pi], n=\pi)$. Note that based on Lemma \ref{le: phi_loss}, any value of $\phi$ does not impact the EPR entanglement of the system when $m=\pm \frac{\pi}{2}$; while $\phi=\phi_0$ is the best option in the last two scenarios where  $(m\in(-\frac{\pi}{2}, \frac{\pi}{2}), n=0)$ and $(m\in (-\pi, -\frac{\pi}{2})\cup(\frac{\pi}{2}, \pi], n=\pi)$.

Table \ref{tb: transmission} and Table \ref{tb: amplification} illustrate the effect of transmission and amplification losses on the existence of EPR entanglement with an optimal choice of $\phi_0$. We see that as either transmission losses or amplification losses increase, the range of values of $m$ over which the EPR entanglement does not exist becomes larger, and the performance of EPR entanglement worsens in the presence of losses, as can be expected.
\begin{table}[htbp]
\centering
\caption{Influence of transmission losses on the range of nonexistence of EPR entanglement with $x=0.4$, $y=1$ and $\kappa=\frac{3 \times 10^6}{\sqrt{2} \times 0.6}x$}\label{tb: transmission}
\begin{tabular}{|c|c|c|}
\hline
$\alpha$ & $[-m_2, -m_1]$ & $[m_1, m_2]$ \\
\hline
$1$ &  $[-1.58951, -1.55208]$ &  $[1.55208, 1.58951]$ \\
\hline
$0.97$ &  $[-1.61856,  -1.52303]$   &  $[1.52303, 1.61856]$     \\
\hline
$0.95$ &  $[-1.63848, -1.50311]$    &    $[1.50311,  1.63848]$    \\
\hline
\end{tabular}
\end{table}
\begin{table}[htbp]
\centering
\caption{Influence of amplification losses on the range of nonexistence of EPR entanglement with $x=0.4$, $y=1$ and $\alpha=0.95$}\label{tb: amplification}
\begin{tabular}{|c|c|c|}
\hline
$\kappa$ & $[-m_2, -m_1]$ & $[m_1, m_2]$ \\
\hline
$0.1\frac{3 \times 10^6}{\sqrt{2} \times 0.6}x$ &  $[-1.62156, -1.52003]$ &  $[1.52003, 1.62156]$ \\
\hline
$0.2\frac{3 \times 10^6}{\sqrt{2} \times 0.6}x$ &  $[-1.62344,  -1.51815]$   &  $[1.51815, 1.62344]$     \\
\hline
$0.5\frac{3 \times 10^6}{\sqrt{2} \times 0.6}x$ &  $[-1.62907,  -1.51252]$   &  $[1.51252, 1.62907]$   \\
\hline
$\frac{3 \times 10^6}{\sqrt{2} \times 0.6}x$ &  $[-1.63848, -1.50311]$    &    $[1.50311,  1.63848]$    \\
\hline
\end{tabular}
\end{table}

\section{Conclusion}
\label{sec:conclusion}
This paper has investigated the effects of phase shifts on stability and EPR entanglement of a dual-NOPA coherent feedback network. Stability condition determined by parameters of the system with losses and phase shifts is derived. The system remains stable in the presence of phase shifts, whenever the system is stable in the absence of phase shifts.

In the lossless system, in the absence of transmission and amplification losses, the presence of phase shifts $\theta_1\neq0$ and $\theta_2\neq0$ in the paths between two NOPAs degrades the two-mode squeezing spectra between the two outputs in the system, which implies EPR entanglement worsens or even vanishes. The two-mode squeezing spectra under the influence of $\theta_1$ and $\theta_2$ is minimized by setting $\phi=\phi_0$. However, existence of EPR entanglement and the degree of EPR entanglement recovered by $\phi_0$ depend on the parameter $m$.  EPR entanglement is fully recovered by $\phi_0$ if $m= \lbrace 0, \pi \rbrace$. EPR entanglement vanishes when $m=\pm\frac{m}{2}$.

When transmission and amplification losses are not neglected, the two-mode squeezing spectra are degraded by phase shifts in the paths and are maximally recovered by setting $\phi=\phi_0$. However, existence of EPR entanglement is impacted by both phase shifts and losses in the paths. The range of values of $m$ over which the EPR entanglement can be improved by $\phi_0$ decreases as losses grow.

\end{document}